\documentclass[a4,11pt]{article}
\usepackage{fullpage}

\usepackage{graphicx}
\usepackage{amsmath, amssymb, amsthm}
\usepackage{tikz}
\usepackage{todonotes}
\usepackage{nicefrac}

\newcommand{\cost}{\mathrm{cost}}

\newtheorem{theorem}{Theorem}[section]
\newtheorem{lemma}[theorem]{Lemma}
\newtheorem{corollary}[theorem]{Corollary}
\newtheorem{definition}[theorem]{Definition}
\newtheorem{fact}[theorem]{Fact}
\newtheorem{conjecture}{Conjecture}

\title{Sparse Relaxed Broadcast Graphs\thanks{This work was done while the first author was visiting the Department of  Computer Science and Software Engineering at Concordia University, Montréal.}}

\author{Pierre Fraigniaud\thanks{Institut de Recherche en Informatique Fondamentale (IRIF), CNRS and Université Paris Cité, France. Additional supports from ANR Projects ENEDISC (ANR-24-CE48-7768-01), and PREDICTIONS (ANR-23-CE48-0010), and from the InIDEX Project METALG.} 
\and 
Hovhannes Harutyunyan\thanks{Department of Computer Science \& Software Engineering, Concordia University, Montréal, Canada}}

\date{}

\begin{document}

\maketitle

\begin{abstract}
Broadcasting in graphs refers to the information dissemination problem in which a source node has an atomic piece of information to be distributed to all the nodes of a graph. In the standard \emph{telephone model}, broadcasting proceeds as a sequence of synchronous rounds, where, at each round, every informed node can transfer the information to at most one of its neighbors. The \emph{broadcast time} of a graph $G$ is the maximum, taken over every node~$v\in V(G)$, of the minimum number of rounds required for broadcasting from $v$ in~$G$. 
Since the number of informed nodes can at most double at each round, the broadcast time of any $n$-node graph is at least $\lceil\log_2 n\rceil$. We study the network design problem that, for every $\epsilon> 0$, asks for the minimum number of edges of $n$-node graphs with broadcast time close to optimal, i.e., at most $(1+\epsilon)\log_2n$. 

Let $\phi=(1+\sqrt{5})/2$ be the golden ratio, and let $\alpha=1/\log_2\phi-1\simeq 0.44$. The aforementioned problem is solved for $\epsilon\geq \alpha$ as Labahn (1989), and  Khachatrian and Haroutunian (1989) independently proved that, for every $n\geq 1$, there are $n$-node trees with broadcast time at most $\log_2n/\log_2\phi$. We address the problem for $\epsilon < \alpha$. We show that, for every~$n\geq 1$, and for every $\epsilon\in(0,\alpha)$, it suffices to add $O(n^{1-\epsilon/\alpha})$ edges to a well chosen $n$-node tree for designing an $n$-node graph with broadcast time $(1+\epsilon)\log_2n$. This asymptotic bound on the additional number of edges improves the previsouly known bound $O(n^{1-\epsilon})$ by Averbuch,  Peeri, and Roditty (2017), and has implications to the design of graphs with minimum broadcast \emph{cost}, defined as number of edges times broadcast time.  

Moreover, we show that the upper bound $2n-\lceil\log_2n\rceil-2$ by Grigni and Peleg (1991) on the minimum number of edges of an $n$-node graph with broadcast time $\lceil\log_2 n\rceil+1$ is of the correct order of magnitude in the sense that, for infinitely many values of~$n$, $\Omega(n)$ edges must be added to some tree for designing an $n$-node graph with broadcast time $\lceil\log_2 n\rceil+1$. Specifically, we show that, for every~$k$, all graphs with $n=2^k$ nodes and broadcast time $k+1$ must have at least $\frac98n$ edges. Therefore, our bound $O(n^{1-\epsilon/\alpha})$ on the additional number of edges for $0<\epsilon<\alpha$ is asymptotically tight at the two extremities of the interval $(0,\alpha]$, as it is $O(n)$ when $\epsilon\to 0$, and $O(1)$ when $\epsilon=\alpha$. 

Finally, we (slightly) improve the aforementioned upper bound $2n-\lceil\log_2n\rceil-2$ by Grigni and Peleg (1991) by showing that, for every~$n$, there exists an $n$-node graph with broadcast time $\lceil\log_2 n\rceil+1$ and at most $2n-4\lceil\log_2n\rceil+O(1)$ edges. 
\end{abstract}

\newpage

\section{Introduction}
\label{sec:introduction}

This paper is studying the construction of sparse networks (i.e., networks  with few communication links) still capable to support efficient communication primitives such as routing, or one-to-all and all-to-all primitives. The paper specifically focuses on a one-to-all communication primitive commonly  referred to as \emph{broadcasting}, in which a source node has to transmit a message to all the other nodes of the network. Our goal is, for any given number of nodes in the network, to identify the tradeoff between the minimum number of communication links of the network and the broadcast time of that network. 

In other words, the question addressed in this paper is: given a number $n$ of nodes, and a time bound~$t$, what is the minimum number of edges $m=m(n,t)$ such that there exists an $n$-node network with $m$ edges and broadcast time~$t$? We address this question in the standard \emph{telephone model}~\cite{FarleyHMP79,HedetniemiHL88}.

\subsection{Context and Objective} 

Let $G=(V,E)$ be a graph, let $s\in V$, and let us assume that $s$ is given an atomic piece of information to be distributed to all the nodes of~$G$. A broadcast protocol from $s$ in $G$ under the telephone model proceeds as  a sequence of synchronous rounds. At each round, every node that is aware of the information broadcasted from~$s$ can transmit this information to at most one of its neighbors in~$G$ during the round. The minimum number of rounds required to broadcast the information from $s$ to all nodes of $G$ is called the \emph{broadcast time of $s$ in~$G$}, and is denoted by $b(G,s)$. We let $b(G)=\max_{s\in V}b(G,s)$ denote the \emph{broadcast time of~$G$}. See Figure~\ref{fig:mbg12} for an illustration. 

\begin{figure}[htb]
\centerline{\includegraphics[scale=.5]{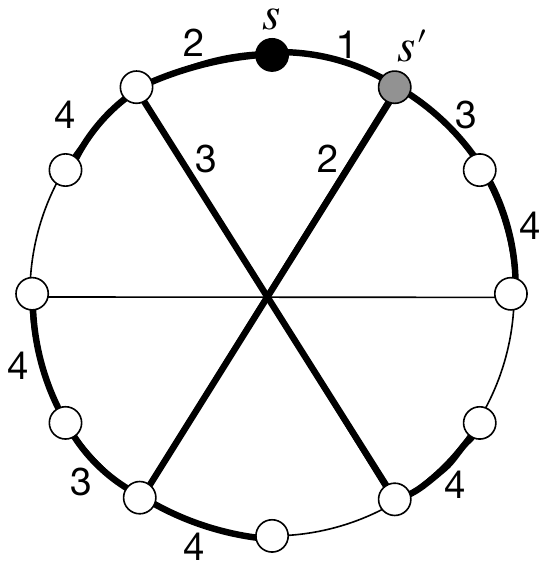}}
\caption{\sl A 12-node graph $G$, with broadcast time $b(G)=4=\lceil\log_212\rceil$, taken from~\cite{FarleyHMP79}. A broadcast protocol from node~$s$, as well as the one from node~$s'$, is illustrated in bold, where the numbers indicate the rounds at which the calls take place. This graph has 15 edges, and no 12-node graphs with less than 15 edges can achieve broadcast time at most~4 (see \cite{FarleyHMP79}), hence the minimum number of edges of 12-node graphs guaranteeing broadcast in at most 4 rounds is~$15$. }
\label{fig:mbg12}
\end{figure}

Since an informed node can inform at most one other node at each round,  the number of nodes aware of the information can at most double at each round, and thus at most $2^r$ nodes may be aware of the information after $r$ rounds.  As a consequence 
\[
b(G,s)\geq \lceil \log_2 n\rceil
\]
for every $n$-node graph $G=(V,E)$, and any node $s\in V$.  The \emph{broadcast problem} consists, given a graph $G=(V,E)$, and a node  $s\in V$, to compute the broadcast time $b(G,s)$. There is a vast literature on the broadcast problem, which was mainly approached according to two different but complementary directions summarized hereafter.

\subsubsection{Minimizing the Broadcast Time} 

The decision version of the broadcast problem, i.e., deciding whether $b(G,s)\leq k$ for any given graph~$G=(V,E)$, any given node $s\in V$, and any given integer~$k$, is NP-complete~\cite{SlaterCH81}. It remains NP-hard even for graphs with feedback vertex number (the cardinality of a minimum feedback vertex set) at most~1, and hence treewidth at most~2~\cite{Tale25}. It also remains NP-hard in cactus graphs, and in graphs of pathwidth at most~2~\cite{AminianKJS25}. Actually, the problem is hard even in graphs of bounded treedepth~\cite{EgamiGHKLMNOV025}. On the other hand, there exists a linear-time algorithm for the broadcast problem in trees~\cite{SlaterCH81}. A series of papers~\cite{Bar-NoyGNS98,KortsarzP95,Ravi94} eventually yield a polynomial-time $O(\frac{\log n}{\log\log n})$-approximation algorithm~\cite{ElkinK03}.  Randomized broadcast protocols --- at each round, every informed node transmits the information to a neighbor chosen uniformly at random --- were shown to be efficient in large classes of graphs (see~\cite{FeigePRU90}). For instance, in the Erdős–Rényi random graph model~$\mathcal{G}_{n,p}$, it holds that, for almost all graphs $G\in \mathcal{G}_{n,p}$,  the randomized broadcast protocol completes in $O(\log n)$ rounds w.h.p. More recently, the problem was tackled in the context of parametrized algorithms. It was shown that the broadcast problem is FPT when parametrized by the size of a feedback edge set, and when parametrized by the size of a vertex cover~\cite{BonnetFV2026,FominFG23}, as well as  when parametrized by vertex integrity~\cite{EgamiGHKLMNOV025}.  

\subsubsection{Minimizing the Number of Edges} 

A dual line of research addresses broadcasting as a \emph{network design} issue, and aims at constructing the sparsest graphs with given order~$n$, and given broadcast time. For every integer $n\geq 1$, let us denote by $B(n)$ the minimum number of edges of any $n$-node graph $G$ with optimal broadcast time, i.e., such that $b(G)= \lceil \log_2 n\rceil$ (see~\cite{FarleyHMP79}). Note that the function $B:\mathbb{N}\to \mathbb{N}$ is well defined as, for every $n\geq 1$, $b(K_n)=\lceil \log_2 n\rceil$, where $K_n$ denotes the complete graph on $n$ vertices. Every $n$-node graph $G$ satisfying
\[
b(G)=\lceil \log_2 n\rceil
\;\mbox{and}\; |E(G)|=B(n)
\]
is called a \emph{minimum broadcast graph}, or MBG for short. The graph on Figure~\ref{fig:mbg12} is a 12-node MBG, i.e., $B(12)=15$.
It was shown~\cite{GrigniP91} that 
\[
B(n)=\Theta(n\cdot L(n)),
\]
where $L(n)$ denotes the number of leading 1s in the binary representation of $n-1$. As a consequence, the sparsest graphs that allow broadcasting in $\lceil \log_2 n\rceil$ rounds may have up to $\Theta(n\log n)$ edges for infinitely many~$n$. For example, the hypercube is known to be a minimum broadcast graph, as every graph with $n=2^k$ nodes and broadcast time~$k$ must have minimum degree at least~$k$, and thus at least $k\, 2^{k-1}$ edges. 

The notion of \emph{relaxed} minimum broadcast graphs (or relaxed MBGs) was also introduced in~\cite{GrigniP91}, and recently revisited and extended in~\cite{AverbuchPR17}. 

\begin{definition}
    For every integers $n\geq 1$ and $\tau \geq 0$, let $B(n,\tau)$ denote the minimum number of edges of any $n$-node graph $G$ with broadcast time at most $\lceil \log_2 n\rceil+\tau$. The parameter $\tau$ is called \emph{relaxation}. An $n$-node graph with $B(n,\tau)$ edges and broadcast time $b(G)\leq \lceil \log_2 n\rceil+\tau$ is called a \emph{$\tau$-relaxed minimum broadcast graph}, or \emph{$\tau$-relaxed MBG} for short.
\end{definition}

Note that, in particular, $B(n,0)=B(n)$. Note also that the relaxation parameter $\tau$ may  be a function of~$n$, e.g., $\tau=\epsilon\cdot\log_2n$ for studying graphs with broadcast time $(1+\epsilon)\log_2n$. 
 
\paragraph{Remark.} 

It is in fact more informative to express the number of edges $B(n,\tau)$ as 
\[
B(n,\tau)=(n-1)+h(n,\tau).
\]
Indeed, $h(\cdot)$ is then the \emph{overhead} to be paid in addition to the unavoidable connectivity requirement imposing $n-1$ edges, i.e., the number of edges of any tree spanning the graph. The network designer has to pay a fixed price  anyway for setting up $n-1$ links interconnecting the $n$ users, and its goal is to minimize the overhead cost induced by adding links for improving services, e.g.,  speeding up communications among these users. The goal of the network designer is to identify the tradeoff between this overhead cost and the resulting improvement in the quality of service.

\medskip

The state of the art on the matter is the following:

\begin{itemize}
\item For a smallest relaxation $\tau=1$, it is known~\cite{GrigniP91} that, for every $n\geq 1$, there is a graph $G$ with $2n-\lceil \log_2 n\rceil-2$ edges and broadcast time $b(G)= \lceil \log_2 n\rceil+1$. Therefore, the overhead on the number of edges of minimum 1-relaxed broadcast graphs satisfies $h(n,1)\leq n-\lceil \log_2 n\rceil-1$. That is, the overhead drops abruptly from $h(n,0)=\Theta(n\cdot L(n))$ to $h(n,1)=O(n)$ by relaxing the broadcast time from $\lceil \log_2 n\rceil$ to $\lceil \log_2 n\rceil+1$.

\item Another threshold of interest occurs for relaxation $\tau=\alpha\cdot\log_2 n$, with 
\[
\alpha=1/\log_2\phi-1
\]
where $\phi=(\sqrt{5}+1)/2$ is the golden ratio (i.e., $\alpha\simeq 0.44$). Indeed, it was shown~\cite{KhachatrianH89,Labahn89} that, for every $n\geq 1$, there exists an $n$-node  \emph{tree} with broadcast time at most $\log_2n/\log_2\phi$. In other words, the overhead $h(n,\tau)$ is null when the relaxation satisfies $\tau\geq \alpha\cdot\log_2 n$. 

\item For the intermediate cases, i.e., for any relaxation $\tau=\epsilon\cdot\log_2 n$ with $0<\epsilon< \alpha$, it was shown~\cite{AverbuchPR17} that the overhead satisfies
\[
h(n,\epsilon\cdot\log_2 n)=O(n^{1-\epsilon}).
\]
\end{itemize}

\bigbreak

\noindent All these results are summarized in Figure~\ref{fig:summary}. In particular, this figure illustrates that the upper bound $O(n^{1-\epsilon})$ on the the overhead established in~\cite{AverbuchPR17}  decays smoothly when the relaxation $\tau=\epsilon\cdot \log_2n$ grows with~$\epsilon$. However, this decay is insufficiently  fast to reach~0 for $\tau=\alpha\log_2 n$, whereas the overhead is null for any relaxation $\tau\geq \alpha\log_2 n$~\cite{KhachatrianH89,Labahn89}. Instead, for a relaxation~$\tau=\alpha\log_2 n$, the upper bound on the overhead established in~\cite{AverbuchPR17} is $O(n^{1-\alpha})$, which is roughly $O(n^{0.56})$.  

\begin{figure}[htb]
\centerline{\includegraphics[scale=0.4]{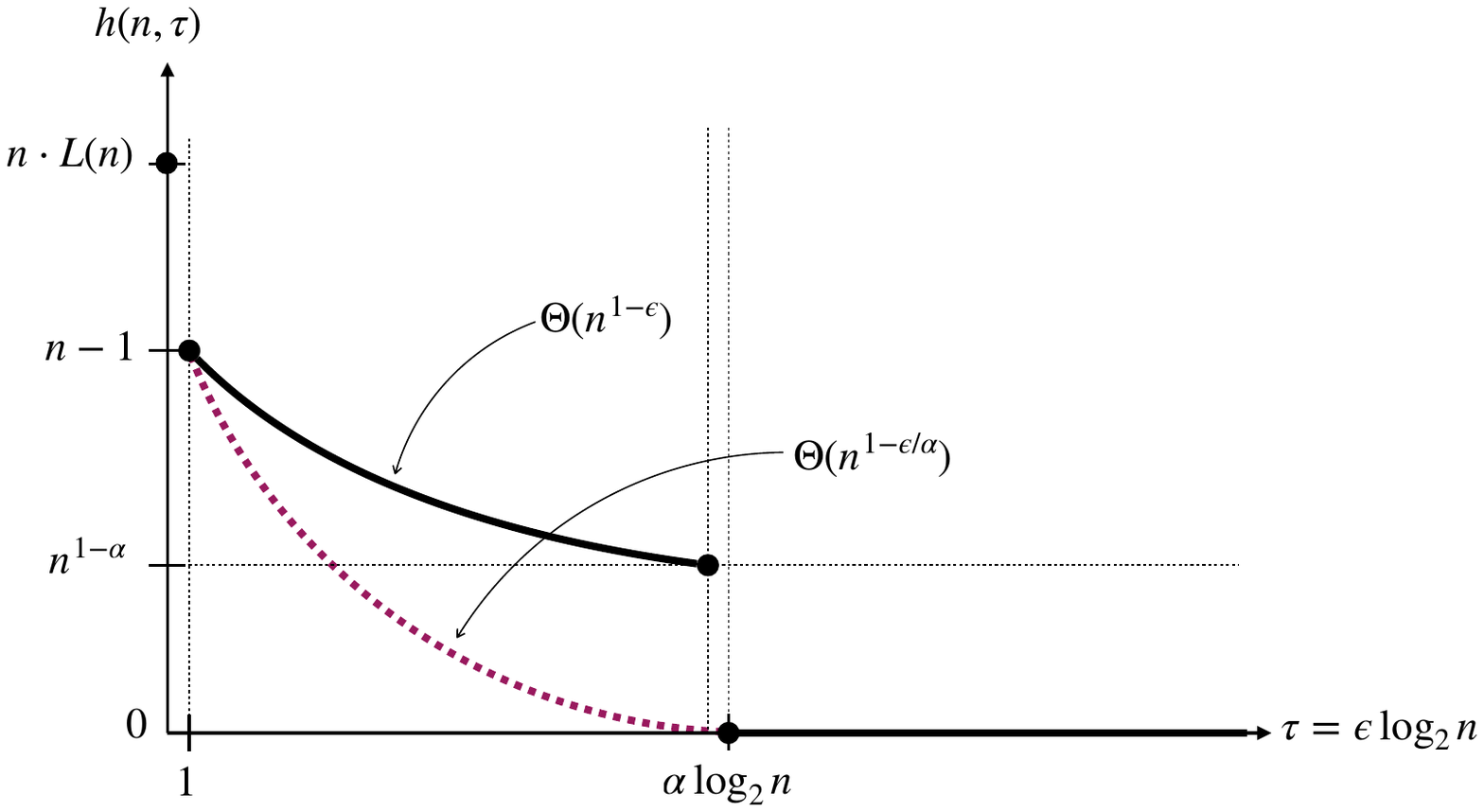}}
\caption{\sl Best known upper bounds on the overhead $h(n,\tau)$ as a function of the relaxation~$\tau=\epsilon\log_2n$, when the number of nodes $n$ is fixed (and large). The plain bold points and lines represent the state of the art before this paper. The dotted line is the new upper bound on the overhead provided in this paper. The parameter  $\alpha\simeq 0.44$ is defined as $\alpha=1/\log_2\phi-1$ where $\phi=(\sqrt{5}+1)/2$ is the golden ratio.  }
\label{fig:summary}
\end{figure}

Note that there is a huge gap $\Theta(n\cdot L(n))$ in the overhead values occurring between relaxations $\tau=0$ and $\tau=1$, so it is a priori unclear whether another gap at $\tau=\alpha\log_2 n$ should occur or not. In this paper, we question whether the gap of $\Theta(n^{1-\alpha})$ for the overhead occurring when the relaxation value $\tau$ reaches $\alpha \log_2 n$  effectively  occurs. 

\subsection{Our Results}

Our results are summarized in Table~\ref{tab:summary}, in comparison to the state-of-the-art. In a nutshell, we first establish a new upper bound $O(n^{1-\epsilon/\alpha})$ on the overhead $h(n,\tau)$ for any relaxation $\tau=\epsilon\log_2n$ with $0<\epsilon<\alpha$, which improves the bound in~\cite{AverbuchPR17}. Next, we show a lower bound $\frac18n$ on the overhead for $\tau=1$ and $n=2^k$, which shows that the linear overhead $n-\log_2n+O(1)$ in~\cite{GrigniP91} is of the right order of magnitude. Finally, we (slightly) improve the overhead in~\cite{GrigniP91} from $n-\log_2n+O(1)$ to $n-4\log_2 n+O(1)$, showing that while the overhead for $\tau=1$ is linear, it is not clear whether it is of the form $n-o(n)$, or some fraction of~$n$. Let us now enter into the details of our results. 

\begin{table}[tb]
\begin{center}
\begin{tabular}{|c|c|c|c|}
\hline
& overhead $h(n,\tau)$ & overhead $h(n,\tau)$  &  \\
relaxation $\tau$ & upper bound & lower bound & reference \\
\hline
0 & $O(n\cdot L(n))$ & $\Omega(n\cdot L(n))$ & \cite{GrigniP91} \\
1 & $n-\log_2 n+O(1)$ & - & \cite{GrigniP91} \\
$\epsilon\log_2n$ with $0<\epsilon<\alpha$ & $O(n^{1-\epsilon})$ & - & \cite{AverbuchPR17} \\
$\epsilon\log_2n$ with $\epsilon\geq \alpha$ & $0$ & - & \cite{KhachatrianH89,Labahn89} \\
$\epsilon\log_2n$ with $0<\epsilon<\alpha$ & $O(n^{1-\epsilon/\alpha})$ & - & [this paper] \\
1 & $n-4\log_2n+O(1)$ & $\frac18n$& [this paper] \\
\hline
\end{tabular}
\end{center}
\caption{\sl State of the art, and our new results, with $\alpha=1/\log_2\phi-1$ where $\phi=(\sqrt{5}+1)/2$ is the golden ratio. The lower bound $\frac18n$ holds for powers of~2. }
\label{tab:summary}
\end{table}

\subsubsection{Upper bound} 

Our upper bound $O(n^{1-\epsilon/\alpha})$ on the overhead $h(n,\tau)$ for a relaxation $\tau=\epsilon\cdot\log_2n$ is a direct consequence of the following theorem established in this paper. 

\begin{theorem}
\label{thm:upper-bound-general}
Let $\alpha=1/\log_2\phi-1$ where $\phi=(\sqrt{5}+1)/2$ is the golden ratio, and let $\epsilon\in (0,\alpha)$. For every $n\geq 1$, there is an $n$-node graph with  broadcast time $(1+\epsilon)\log_2n$, and $n+O(n^{1-\epsilon/\alpha})$ edges.
\end{theorem}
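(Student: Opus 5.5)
Consider the following two-level construction. Let $k=\lceil \log_2 n\rceil$ and $N=\lceil n^{1-\epsilon/\alpha}\rceil$. We partition the $n$ nodes into $N$ groups of size about $m=n/N\approx n^{\epsilon/\alpha}$. Each group is spanned by a Fibonacci-type broadcast tree in the sense of~\cite{KhachatrianH89,Labahn89}, with broadcast time at most $\log_2 m/\log_2\phi=(1+\alpha)\log_2 m$. Each tree has a distinguished node, its root. The roots are interconnected by a sparse ``backbone'' with $O(N)$ edges and broadcast time $\log_2 N+O(1)$. For instance, we can use the graphs of Grigni and Peleg with $2N-O(\log N)$ edges and broadcast time $\lceil\log_2N\rceil+1$. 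The total edge count is then $N(m-1)+O(N)=n+O(n^{1-\epsilon/\alpha})$, as required.

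The broadcast protocol from a source $s$ in group $i$ runs in three phases. First, $s$ sends the message up to root $r_i$, which takes at most the depth of the Fibonacci tree, roughly $(1+\alpha)\log_2 m$ rounds in the worst case. Second, $r_i$ broadcasts over the backbone in $\log_2 N+O(1)$ rounds. Third, every root broadcasts inside its own tree in $(1+\alpha)\log_2 m$ rounds. Adding the phases naively gives $2(1+\alpha)\log_2 m+\log_2 N$. With $\log_2 m=(\epsilon/\alpha)\log_2 n$ this is $\log_2 n+(\epsilon/\alpha)(1+2\alpha)\log_2 n$ after accounting for the $-\log_2 m$ saved in $\log_2N$. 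That exceeds the target $(1+\epsilon)\log_2 n$. Hence the phases must be overlapped, and this overlap is the main obstacle.

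To resolve it, I would exploit the fact that a Fibonacci tree broadcasts from \emph{any} node in time $(1+\alpha)\log_2 m$. I would also have the source's group participate while the message climbs. Concretely, as the message travels from $s$ to $r_i$, the nodes already informed keep broadcasting inside the group. The remaining subtrees are thus informed in parallel, so group $i$ itself is done within about $(1+\alpha)\log_2 m$ rounds plus lower-order terms. The real cost is the delay before the backbone phase starts. To kill this delay, I would attach the backbone not only at the root but on a path: every group is given $O(1)$ extra edges linking it to the backbone near the nodes a source can quickly reach. A cleaner alternative is to replace ``group tree + root'' by a Fibonacci tree on the full set $n$ whose top levels (the first $(1-\epsilon/\alpha)\log_2 n$ levels from the root, containing about $N$ nodes) are densified into the Grigni--Peleg backbone. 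Under this alternative, any source reaches the dense top within the depth of its bottom subtree, about $(1+\alpha)\log_2 m=(\epsilon/\alpha+\epsilon)\log_2 n$ rounds. I would then check that the informing of the top part and the broadcast down the bottom subtrees interleave. The bottom subtrees hanging at depth $d$ then complete by time $d+(1+\alpha)\log_2 m$, and choosing the number of densified levels correctly makes the total $\log_2 N+(1+\alpha)\log_2 m\approx\log_2 n+\alpha\log_2 m=(1+\epsilon)\log_2 n$.

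Under this design, the key steps are as follows. Step one is to recall the recursive structure of the Fibonacci broadcast trees, $T_t=T_{t-1}$ joined to $T_{t-2}$, with $|T_t|\approx\phi^t$. Step two is to pick $t$ with $\phi^t\approx m$ and to build $N$ copies whose roots are glued to a backbone of minimal broadcast graphs with $O(N)$ extra edges. Step three is to show that from any source the backbone is reached and fully informed within $\log_2 N+O(1)$ rounds after a short prefix, using the arbitrary-source property inside the source's copy. Step four is to sum the times. The delicate point is the prefix of step three: a leaf source may need $(1+\alpha)\log_2 m$ rounds just to reach its root. I would handle this by connecting to the backbone not only the root of each copy but a constant number of hub nodes per copy (still $O(N)$ edges), positioned so that every node is within $O(1)$ rounds of a hub in the Fibonacci recursion. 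If that fails, I would accept an additive slack of $\alpha\log_2 m$ by rescaling $m$, which only changes constants in the exponent and must be checked to still yield exactly $1-\epsilon/\alpha$.
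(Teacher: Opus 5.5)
Your architecture --- about $N\approx n^{1-\epsilon/\alpha}$ trees of size about $M\approx n^{\epsilon/\alpha}$ hanging from the vertices of a Grigni--Peleg-type backbone with $O(N)$ edges and broadcast time $\log_2N+O(1)$ --- is exactly the paper's. The gap is that your analysis never pins down the one quantity that governs this construction: the \emph{depth} of a hanging tree plus the broadcast time \emph{from its root}. You bound each of these two terms separately by the Labahn/Khachatrian--Haroutunian bound $(1+\alpha)\log_2 M$ and get $\log_2 n+(1+2\alpha)\log_2 M$. The obstacle you then try to remove is an artifact of this double counting. The missing ingredient is Lemma~\ref{lem:technical-asymp-1} together with Fact~\ref{basic-fact}. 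The binomial tree $B_k$ truncated at level $m$ has depth $m$, root broadcast time $k$, and $f(s,m)=\sum_{i\le m}\binom{s-m}{i}$ vertices, where $s=m+k$; for $m\approx\beta s$ this is $(1\pm o(1))\phi^s$. So a tree on $M$ vertices can have depth plus root broadcast time only $\log_\phi M+O(1)=(1+\alpha)\log_2 M+O(1)$ \emph{in total}; its depth alone is about $\beta\log_\phi M\approx 0.4\log_2 M$. With such trees, the plain sequential protocol (climb to the root, broadcast on the backbone, broadcast down every tree) already takes $\log_2N+(1+\alpha)\log_2M+O(1)=\log_2 n+\alpha\log_2 M+O(1)$ rounds. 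This is $(1+\epsilon)\log_2 n$ once the additive constant is absorbed by changing $N$ by a constant factor, which is what the paper's choice of $d$ does. No overlapping of phases is needed.

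None of your proposed repairs closes the gap as written. The hub idea cannot work. In a tree in which every vertex broadcasts within $O(\log M)$ rounds, every degree is $O(\log M)$, since a vertex must call each of its neighbours itself. Hence $O(1)$ hubs and $O(1)$ extra edges per group reach only $(\log M)^{O(1)}$ vertices within $O(1)$ rounds, not all $M$. The fallback rescaling does not preserve the exponent. With the budget $\log_2 n+(1+2\alpha)\log_2 M$ you must take $M=n^{\epsilon/(1+2\alpha)}$, hence $N=n^{1-\epsilon/(1+2\alpha)}$. That is weaker than $n^{1-\epsilon/\alpha}$, and even weaker than the previously known $O(n^{1-\epsilon})$. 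The ``cleaner alternative'' is ill-defined. A Fibonacci tree on $n$ vertices has depth only about $0.4\log_2 n$, and its top $j$ levels do not contain $2^j$ vertices, so there are no $(1-\epsilon/\alpha)\log_2 n$ levels holding about $N$ vertices to densify. Moreover, its claimed total $\log_2 N+(1+\alpha)\log_2 M$ silently assumes that each hanging subtree has depth plus root broadcast time at most $(1+\alpha)\log_2 M$. That is at odds with your earlier use of $(1+\alpha)\log_2 M$ for the depth alone, and it is precisely the fact that needed proof. Finally, letting the source's group broadcast while the message climbs does not shorten the delay before the backbone phase, as you yourself observe.
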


Figure~\ref{fig:summary}  illustrates this result, by displaying the dotted curve $h(n,\tau)=n^{1-\epsilon/\alpha}$ for ${0<\epsilon<\alpha}$. 
Our bound is asymptotically tight when $\epsilon=\alpha$. 

\paragraph{Our Techniques.}

Our construction is different from the construction in~\cite{AverbuchPR17}. The latter essentially constructs graphs by connecting hypercubes of different dimensions together in a sophisticated manner. Instead, our construction is roughly as follows (details will be provided later). It is merely based on a ``core graph''~$C$ in which each node $v$ is also the root of a tree $T_v$ of depth~$m$ (See Figure~\ref{fig:construction}). All the trees are isomorphic to a same rooted tree~$T$. Hence, the total number of nodes is 
\[
n=|V(C)|\cdot |V(T)|.
\]
The tree $T$ is obtained by ``truncating'' at level~$m$ the Binomial tree of dimension~$k$ for suitable parameters $m$ and~$k$. Recall that Binomial trees form a family of rooted trees defined as follows. The binomial tree $B_0$ consists of a single vertex. For $k\geq 1$, the $k$-dimensional binomial tree $B_k$ consists of two binomial trees $B_{k-1}$ whose roots are connected by an edge, and where one of these two roots is selected as the root of $B_k$. The number of nodes at level $i\in\{0,\dots,k\}$ in $B_k$ is $\binom{k}{i}$, hence the name of these trees. The tree $T$ is obtained from $B_k$ by keeping only the nodes at distance at most $m$ from the root, i.e., by keeping only the levels $0,\dots,m$ of $B_k$ (See Figure~\ref{fig:construction}). The following summarizes the main properties of the (possibly truncated) Binomial trees used throughout the paper.

\begin{fact}\label{basic-fact}
    For every $k\geq 0$, the number of nodes in the $k$-dimensional Binomial tree $B_k$ is~$2^k$, and if $r$ denotes the root of $B_k$, then $b(B_k,r)=k$. The tree $T$ resulting from  truncating the Binomial tree $B_k$ at level $m\in\{1,\dots,k\}$, i.e., from keeping only nodes at distance at most $m$ from the root~$r$ of~$B_k$, has $\sum_{i=0}^m\binom{k}{i}$ nodes, depth~$m$, and satisfies $b(T,r)=k$. Conversely, any rooted tree $T$ satisfying $b(T,r)\leq k$ for its root $r$ is a subtree of a Binomial tree $B_k$ rooted at~$r$. 
\end{fact}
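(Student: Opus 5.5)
Fact~\ref{basic-fact} has four parts, and I would prove them in order.

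\emph{Size and broadcast time of $B_k$.} I would induct on $k$ using the recursive definition. Since $B_k$ is two disjoint copies of $B_{k-1}$, we get $|V(B_k)|=2|V(B_{k-1})|=2^k$. For broadcasting from the root $r$, at round~1 the root informs the root $r'$ of the second copy. Then $r$ and $r'$ each broadcast in their own copy of $B_{k-1}$ in parallel, which by induction takes $k-1$ further rounds. Hence $b(B_k,r)\le k$. The counting bound $\lceil\log_2 2^k\rceil=k$ gives $b(B_k,r)\ge k$. The level count $\binom{k}{i}$ also follows by induction. The levels of the rooted copy are unchanged, and the levels of the second copy shift down by one, so the count is $\binom{k-1}{i}+\binom{k-1}{i-1}=\binom{k}{i}$.

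\emph{The truncated tree $T$.} The node count $\sum_{i=0}^m\binom{k}{i}$ follows immediately from the level counts. The depth is exactly $m$ because $m\le k$, so level $m$ is nonempty.

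For the upper bound $b(T,r)\le k$, restrict the protocol of $B_k$ to $T$. Keep only the calls whose endpoints both lie in $T$. Because $T$ is closed under taking ancestors, every node of $T$ is still informed through its tree path, at the same round as before.

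For the lower bound $b(T,r)\ge k$, note that the root of $B_k$ has $k$ children: the roots of the copies $B_0,\dots,B_{k-1}$ obtained by fully unfolding the recursion. All of them lie at level $1\le m$, so they are in $T$. The root calls at most one child per round, so some child is called only at round $k$ or later. This gives $b(T,r)\ge k$.

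\emph{Converse: any rooted tree with $b(T,r)\le k$ embeds in $B_k$ rooted at $r$.} I would induct on $k$; for $k=0$, $T$ is a single node. Fix an optimal protocol on $T$ and let $c$ be the child of $r$ called at round~1. Removing the edge $rc$ splits $T$ into two parts. The part $T_c$ rooted at $c$ is fully informed within the remaining $k-1$ rounds by calls inside $T_c$. The part $T_r$ rooted at $r$ is also informed within rounds $2,\dots,k$, since its information never passes through $c$. Both parts therefore have broadcast time at most $k-1$ from their roots. By induction, each embeds in a copy of $B_{k-1}$ with root mapped to root. Joining the two embeddings along the edge $rc$ gives an embedding of $T$ into $B_k$, since $B_k$ is exactly two copies of $B_{k-1}$ with their roots joined.

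The only step needing care is this converse. In particular, one must check that the rounds in $T_r$ really shift by one, which holds because the root spent round~1 calling $c$. All other parts are routine.
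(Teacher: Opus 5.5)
Your proof is correct. The paper states this as a standard Fact and gives no proof, so there is no argument of the paper's to compare against. Your induction along the decomposition of $B_k$ into two copies of $B_{k-1}$ with their roots joined is the standard way to obtain all four parts.

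The converse is the part the paper actually relies on later, when it asserts that a broadcast tree $T_v$ with $b(T_v,v)\le k+1$ is a subtree of $B_{k+1}$. Your argument for it is sound: you split $T$ at the edge $rc$, where $c$ is the child called in round~1, and shift the remaining rounds by one on each side.

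A few implicit points are worth stating explicitly:
\begin{itemize}
\item In the converse, $c$ exists only when $|V(T)|\ge 2$; the one-node case is trivial.
\item You use that an optimal protocol has the root call someone in round~1. Otherwise round~1 is idle and can be deleted, contradicting optimality.
\item In the lower bound $b(T,r)\ge k$ for the truncated tree, you use that in a tree each child of $r$ can only be informed by $r$ itself.
\item The joined embedding is injective and root-preserving because the two copies of $B_{k-1}$ are disjoint and the edge $rc$ is sent to the edge between their roots.
\end{itemize}
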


Let us denote by $d$ be the broadcast time of the core graph~$C$, i.e., $d=b(C)$. The graph $G$ resulting from our construction has broadcast time 
\[
b(G)\leq m+d+k.
\]
Indeed, (1)~any node is at distance at most $m$ from a node $v$ in the core (by Fact~\ref{basic-fact}), (2)~broadcasting from $v$ to all nodes in $C$ takes at most $d$ rounds (by the definition of~$d$), and (3)~broadcasting from any node $u\in V(C)$ in the tree $T_u$ takes at most $k$ rounds (by Fact~\ref{basic-fact} again). 

Given a relaxation value $\tau=\epsilon\cdot\log_2n$, our strategy thus consists to determine what is the appropriate core graph~$C$, and what are the appropriate parameters $k,m$, and $d$ so that, first, $b(C)=d$, second, 
\[
m+d+k\leq \lceil\log n\rceil+\tau, 
\]
and, last but not least, the number of edges $|E(G)|$ of~$G$ is minimized, where
\[
|E(G)|=|E(C)|+|V(C)| \cdot |E(T)|=|E(C)|+|V(C)| \cdot\Big(\sum_{i=0}^m\binom{k}{i}-1\Big).
\]
Note that choosing $C$ as the $d$-dimensional hypercube~$Q_d$ is one option, as the broadcast time of~$Q_d$ is~$d$, as desired. However, $Q_d$ has $d\, 2^{d-1}$ edges. So, as we shall show, picking a sparser graph as the core graph~$C$ is actually a much better option, even if this choice comes with the cost of decreasing the number of nodes in the core for preserving a broadcast time~$d$. 

\begin{figure}[tb]
\centerline{\includegraphics[scale=0.45]{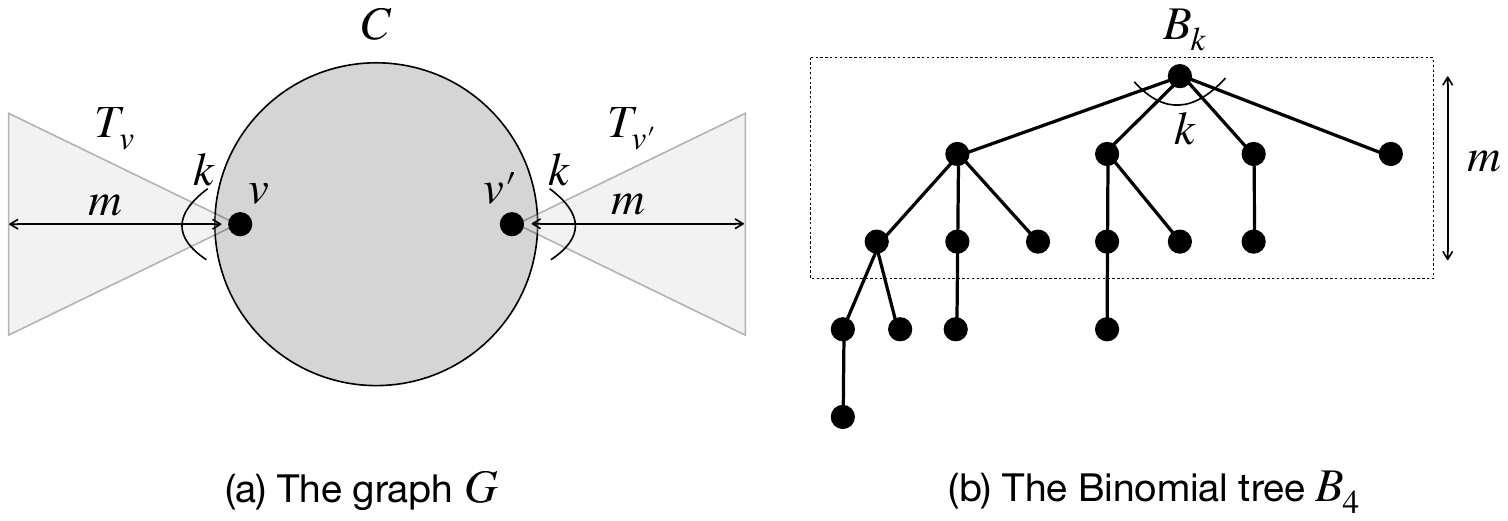}}
\caption{\sl  Illustration of our construction. (a)~Our graph $G$ consists of a core graph $C$ to which is attached a tree $T_v$ at each node~$v$. Each tree $T_v$ is isomorphic to a ``truncated'' Binomial tree $B_k$ at level~$m$. (b)~A k-dimensional Binomial tree $B_k$ for $k=4$. The doted box displays $B_4$ truncated at level $m=2$ (the root is at level~0). }
\label{fig:construction}
\end{figure}

\paragraph{Remark.}

As is, the above construction produces graphs with a number of nodes of a specific form $n=|V(C)|\cdot |V(T)|$. To get graphs with arbitrary number of vertices, we shall  remove nodes from some of the trees~$T_v$, $v\in V(C)$, in order to get a graph $G$ with the desired total number of vertices. For instance, the tree obtained by truncating $B_4$ at level~2 has $11$ nodes (see Figure~\ref{fig:construction}). One can ``prune''  this tree further, by removing as many nodes as desired, arbitrarily, starting from the leaves, while preserving a broadcast time at most 4 from the root.

\paragraph{Application to the broadcast cost.} 

An interesting consequence of Theorem~\ref{thm:upper-bound-general} is that it directly follows from it that the minimum value of the \emph{broadcast cost} of any $n$-node graph $G$, defined in~\cite{Harutyunyan14} as $\cost(G)=b(G)\cdot|E(G)|$ can be upper bounded. Specifically, for every $n\geq 1$, there exists an $n$-node graph $G$ with broadcast cost $n\log_2n+O(n\log\log n)$, which matches the upper bound in~\cite{Harutyunyan14}, and is close to the lower bound $n\log_2n+\Omega(n)$ from~\cite{Harutyunyan14}.

\subsubsection{Lower bound}

Deriving lower bounds on the minimum number of edges of any $n$-node graph with prescribed broadcast time  is notoriously difficult, as witnessed by the fact that exact bounds on the value of the minimum number of edges $B(n)$ of any $n$-node graph $G$ with (optimal) broadcast time $b(G)= \lceil \log_2 n\rceil$ are only known for very few ranges of~$n$. For instance, it is not even known whether the function $B$ is non-decreasing between two consecutive powers of~2, i.e., wether $B(n+1)\geq B(n)$ for all $n\in[2^k,2^{k+1}-1]$ and $k\geq 0$ (see~\cite{HarutyunyanL03}). This result is known to hold only for the first quarter of the interval $[2^k,2^{k+1}-1]$ \cite{HarutyunyanL12}. 

The problem is even more difficult for $\tau$-relaxed broadcast time $b(G)= \lceil \log_2 n\rceil+\tau$. One reason for this is that the minimum degree of the sparsest graphs with broadcast time $b(G)= \lceil \log_2 n\rceil+1$ may  be constant, which ruins one of the most fruitful approach for designing lower bounds on the number of edges, which consists of arguing about the minimum degree of the nodes. Nevertheless, we were able to come up with a linear lower bound on $B(n,1)$, i.e., on the minimum number  of edges of  $n$-node graphs with broadcast time $\lceil \log_2 n\rceil+1$.  

\begin{theorem}
\label{thm:lower-bound}
For every $k\geq 1$ and $n=2^k$, any $n$-node graph with broadcast time $k+1$ has at least $\frac98n-O(1)$ edges.
\end{theorem}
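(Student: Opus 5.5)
The plan is a degree-counting (discharging) argument. Writing $n_i$ for the number of vertices of degree $i$, we have $2|E|=\sum_v \deg(v)\ge n_1+2n_2+3(n-n_1-n_2)$. So it suffices to show that the low-degree vertices are scarce: roughly, that the total deficit $2n_1+n_2$ relative to degree $3$ is at most $\frac34 n+O(1)$, which gives $\sum_v\deg(v)\ge\frac94 n-O(1)$. Equivalently, I would assign every vertex an initial charge $\deg(v)$ and redistribute charge so that every vertex, up to $O(1)$ exceptions, ends with at least $\frac94$.

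The structural information comes from broadcasting \emph{from} a low-degree vertex, where the time bound $k+1$ is almost tight because $n=2^k$. Take a leaf $u$ with neighbor $v$ as source. Round~1 is forced ($u$ calls $v$), and $u$ stays idle afterwards. From then on only $v$ and the vertices it informs are active, so after $k+1$ rounds at most $1+2^k$ vertices are informed, against $2^k$ needed. There is therefore slack of exactly one call: in the broadcast scheme rooted at $v$, every informed vertex except for a single wasted slot informs a new vertex at every remaining round. Hence the scheme from $v$ is a spanning binomial tree $B_k$ with one vertex removed (up to the single slack), by Fact~\ref{basic-fact}. In particular, every other leaf of $G$ must be informed in the last round, and every vertex informed at round $j$ must have degree at least $k+1-j$ in $G$. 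A similar but weaker tightness argument applies to a degree-$2$ source $u$ with neighbors $v,w$: after two rounds at most $u$, $v$ and $w$ are active, and one counts the resulting slack.

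From these tightness properties I would extract local lemmas of the following type: (i) a vertex is adjacent to at most one leaf, or else its leaf-neighbors force it to have degree at least $3$ plus one per extra leaf; (ii) the neighbor of a leaf has degree at least $3$, and in fact its other neighbors cannot all be of degree $\le 2$; (iii) two degree-$2$ vertices cannot be adjacent along a long path, since a path of low-degree vertices wastes too many calls when the source sits in its middle. The discharging rule is then: each leaf receives $\frac54$ from its neighbor, each degree-$2$ vertex receives $\frac18$ from each neighbor, and one checks via (i)--(iii) that every vertex of degree $\ge3$ keeps at least $\frac94$. The $O(1)$ term absorbs the boundedly many vertices where the slack is used.

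The main obstacle is the verification of lemmas (i)--(iii), specifically proving that high-degree vertices are not overloaded by too many low-degree neighbors. The single unit of slack has to be tracked carefully, and the tightness argument from a leaf source controls the global shape of the broadcast tree well but local adjacency only indirectly. If the direct local lemmas fail, the first fallback I would try is to use the near-binomial structure itself: in the scheme from a leaf's neighbor, half the vertices are informed in the last round. Every $G$-leaf lies among these, so I would count how many of those last-round vertices can actually have degree $\le 2$ in $G$ by broadcasting again from them. That count should yield the $\frac34 n$ deficit bound directly.
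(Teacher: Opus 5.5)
\textbf{Verdict: genuine gap.} Your central reduction is false, not merely unproven, and the discharging rule fails on valid graphs.

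\textbf{Counterexample to the reduction.} The bound $2n_1+n_2\le\frac34 n+O(1)$ fails for graphs of broadcast time $k+1$. Take the hypercube $Q_{k-2}$ and attach to each of its vertices $h$ a triangle $h\,x_h\,y_h$ and a pendant leaf $z_h$. Then $n=2^k$ and $b(G)=k+1$. A non-hub source informs its hub at round~1, and the hypercube broadcast uses rounds $2,\dots,k-1$. At round $k$ every hub calls its $x$, and at round $k+1$ every $x$ calls its $y$ while every hub calls its $z$. The source's own gadget only saves calls (e.g.\ $x_h$ calls $y_h$ at round~2), and a hub source is one round faster. Here $n_1=n/4$ and $n_2=n/2$, so $2n_1+n_2=n$.

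\textbf{Failure of the discharging rule.} The same graph defeats your rule. $x_h$ and $y_h$ are adjacent vertices of degree~$2$, so lemma (iii) cannot exclude adjacent degree-$2$ vertices. Each of them ends with $2+\frac18=\frac{17}{8}<\frac94$. This affects $n/2$ vertices, so no $O(1)$ term absorbs it, and your fallback aims at the same false $\frac34 n$ bound.

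\textbf{Why it fails.} Low-degree vertices are not scarce. What your slack argument really yields is that they force \emph{high}-degree neighbors, and the inequality $\sum_v\deg(v)\ge 3n-2n_1-n_2$ discards exactly that information. So the discharging formulation is not ``equivalent'' to the deficit bound. Likewise, lemma (ii) in the form ``degree at least $3$'' is far too weak to let a leaf's neighbor give away $\frac54$.

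\textbf{How to repair it.} The discharging idea can be rescued with sharper local facts, all obtainable from your own slack counting.
(a) From a leaf source, the leaf's neighbor $w$ must call in every round $2,\dots,k$, so $\deg(w)\ge k$. Any other leaf at $w$ can only be called at round $k$ or $k+1$, so $w$ has at most three leaf neighbors.
(b) No degree-$2$ vertex $u$ has two degree-$2$ neighbors, since from $u$ at most $3+2^{k-1}+2^{k-2}<2^k$ vertices get informed when $k\ge 4$.

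By (a) and (b), every degree-$2$ vertex has a neighbor of degree at least~$3$. Now use these rules:
\begin{itemize}
\item each leaf receives $\frac54$ from its neighbor;
\item each degree-$2$ vertex receives $\frac14$ in total from its neighbors of degree at least $3$: $\frac18$ from each if there are two, and $\frac14$ from the only one otherwise.
\end{itemize}
A vertex of degree $d\ge 3$ with $L$ leaf neighbors then keeps at least $\frac34 d-L$. This is at least $\frac94$ when $L=0$, and also when $L\ge 1$, because then $d\ge k\ge 7$ and $L\le 3$. This gives $|E|\ge\frac98 n$ for $k\ge 7$.

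\textbf{Comparison with the paper.} The repaired argument would be a genuinely different, purely local proof. The paper instead splits on $\delta(G)$. When $\delta(G)\ge 2$ it uses Lemma~\ref{lem-number-leaves-subtree-BT}: a $2^k$-vertex subtree of $B_{k+1}$ has at least $n/4$ leaves, each needing a non-tree edge. When $\delta(G)=1$ it exploits $\deg(w)\ge k$ for leaf neighbors. As written, however, your plan rests on a false inequality and a transfer rule that fails on valid graphs.
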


A direct consequence of this theorem is that the overhead $h(n,1)$ for 1-relaxed minimum broadcast graphs satisfies $h(n,1)\geq \frac{n}{8}-O(1)$, i.e., $h(n,1)$ is linear in~$n$, for an infinite number of integers~$n$. This shows that the best known upper bound $h(n,1)\leq n-\log_2n +O(1)$~\cite{GrigniP91} has the right order of magnitude, that is $h(n,1)=O(n)$, and this is tight for infinitely many~$n$. In particular, our general upper bound $O(n^{1-\epsilon/\alpha})$ is tight when $\epsilon\to 0$.  

\subsubsection{1-Relaxed Minimum Broadcast Graphs}

Finally, we show that the upper bound $n-\lceil\log_2n\rceil-2$ on the overhead $h(n,1)$ in~\cite{GrigniP91} is not tight. This is the consequence of the following result, which shows that $h(n,1)\leq n-4\log_2n+O(1)$. 

\begin{theorem}
\label{thm:upper-bound-tau-1}
    For every $n\geq 1$, there exists an $n$-node graph with broadcast time $\lceil\log_2n\rceil+1$ and at most $2n-4\log_2n+O(1)$ edges. 
\end{theorem}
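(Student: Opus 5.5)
We prove Theorem~\ref{thm:upper-bound-tau-1} by refining the Grigni--Peleg construction rather than building a new one. Write $k=\lceil\log_2 n\rceil$. The natural sparse graph with broadcast time $k+1$ is a core of small size to which trees are hung, in the spirit of the construction of Figure~\ref{fig:construction}. Here the core is the hypercube $Q_d$ with small $d$, and the trees are binomial trees $B_{k-d}$, possibly pruned. Every node reaches the core in at most one round provided the hung trees have depth at most one relative to a ``double'' core. Hence we instead use the following scheme.

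Take the Grigni--Peleg graph, which is (up to pruning) a spanning structure in which each node $v$ is matched with a partner $v'$. The $n/2$ ``core'' nodes induce a graph of broadcast time $k-1$, and each non-core node is a leaf-like partner attached to two core nodes. This yields $n/2$ pendant pairs plus a core with about $n/2+\dots$ edges, giving $2n-k-O(1)$ edges.

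The improvement comes from treating the top levels separately. The broadcast from any source first spends one round getting into the core. The core's own broadcast scheme has slack at its first few rounds, since there the number of informed nodes is tiny compared to $2^{\text{round}}$. We therefore replace the core's lowest-level structure, namely a small hypercube-like gadget on $O(1)\cdot 2^{c}$ nodes near the ``top'' of each binomial decomposition, by a tree-like gadget that saves a constant number of edges per level. Concretely, we proceed recursively on $k$. We will show $E(k)\le 2E(k-1)+2-4$ style recursions fail, so instead we construct $G_n$ from two copies of $G_{n/2}$ joined by a perfect matching, except that along a path of $\Theta(k)$ special nodes we can drop extra edges. That yields a saving of $3$ additional edges per doubling beyond Grigni--Peleg's saving of $1$, hence $4k$ in total.

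The key steps, in order, are as follows.
\begin{enumerate}
\item Recall the Grigni--Peleg graph and its recursive description.
\item Identify, at each recursion level $i$, a local configuration (a $4$-cycle or triangle near the distinguished root pair) where $4$ edges can be removed while keeping broadcast time $i+1$ from every source, by verifying a finite list of source positions.
\item Show the removals at different levels are compatible, since they involve disjoint regions.
\item Handle arbitrary $n$ by pruning leaves, as in the Remark after Fact~\ref{basic-fact}, which does not increase broadcast time and keeps the $-4\log_2 n+O(1)$ saving.
\end{enumerate}

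The main obstacle is step 2: certifying that after deleting the extra edges, sources located at or near the modified gadget still finish in $k+1$ rounds. Those sources lose their first-round slack. I would first try making the modified gadget a binomial tree $B_3$ with a single chord, and check by case analysis that every source in it informs a spanning binomial-like structure on time.
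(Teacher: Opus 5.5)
Your proposal has a genuine gap: it never pins down a concrete graph, and the structures it does describe cannot have $2n-O(\log n)$ edges.

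\textbf{The structures you describe are too dense.} You misdescribe the Grigni--Peleg graph as a core on $n/2$ nodes with broadcast time $k-1$ plus pendant partners. Such a core is itself a minimum broadcast graph. For $n=2^k$ its source must make a new call in each of its $k-1$ rounds, so it needs minimum degree $k-1$, hence at least $(k-1)2^{k-2}$ edges, not ``about $n/2$''. The actual Grigni--Peleg graph is different. It is the binomial tree $B_k$ pruned bottom-up to $n$ nodes, plus a ``hub'' edge to the root $r$ from every node other than $r$ and its children. Every source spends one round reaching $r$ and then $k$ rounds in the tree. Your recursion has the same problem. Building $G_n$ from two copies of $G_{n/2}$ joined by a perfect matching gives $E(n)=2E(n/2)+n/2-O(k)$, i.e.\ $\Theta(n\log n)$ edges. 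So the bound $2n-4\log_2 n+O(1)$ cannot propagate through it.

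\textbf{Step~2 is left undone.} Which edges are removed, and why every source still finishes in $k+1$ rounds, is the entire content of the theorem. You leave it as the ``main obstacle'' to be found by trial with a $B_3$ gadget.

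\textbf{The missing idea.} The saving is not $O(1)$ edges at each of $k$ levels. It is $\Theta(k)$ hub edges at a constant number of places near the top of $B_k$.
\begin{itemize}
\item Split $B_k$ into two $B_{k-1}$'s rooted at $r,r'$, and each of these into two $B_{k-2}$'s rooted at $r,s$ and $r',s'$. Let $S,S'$ be the children of $s,s'$ in their $B_{k-2}$'s.
\item Because $b(B_k,r')=b(B_k,r)=k$, no tree-neighbour of $r$ or of $r'$ needs a hub edge.
\item Cross the hubs: nodes of the $r$-half are joined to $r'$, and nodes of the $r'$-half to $r$. Then delete the hub edges of $S\cup S'$.
\item A node $x\in S$ with a child $y$ now broadcasts as follows. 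In round 1, $x\to y$. In round 2, $y\to r'$ and $x\to s$. In round 3, $r'\to s'$ and $s\to r$. The four $B_{k-2}$'s then complete in parallel in $k-2$ rounds, for $k+1$ in total.
\item The leaf child of $s$ (resp.\ $s'$) has no such $y$. It must keep its hub edge, which costs only $O(1)$.
\end{itemize}
The $4k-4$ nodes without a hub edge are $r$, $r'$, their tree-neighbours, and $S\cup S'$. The graph therefore has $(n-1)+(n-4k+4)+O(1)=2n-4k+O(1)$ edges.
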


The improvement over~\cite{GrigniP91}, from $n-\lceil\log_2 n\rceil+O(1)$ to $n-4\lceil\log_2 n\rceil+O(1)$, is rather small. Nevertheless, it is worth mentioning two points. First, Theorem~\ref{thm:upper-bound-tau-1} demonstrates that $n-\lceil\log_2 n\rceil+O(1)$ is not the right answer, and, in view of the lower bound~$\frac{n}{8}$ established in Theorem~\ref{thm:lower-bound}, this is an invitation for a more significant improvement.  Second, the graph resulting from Theorem~\ref{thm:upper-bound-tau-1} can be used as the core graph in the proof of Theorem~\ref{thm:upper-bound-general}. 

\subsection{Related Work}

A detailed survey of the early work in broadcasting and gossiping (i.e., simultaneous broadcasting from all the nodes) can be found in~\cite{HedetniemiHL88}. Broadcasting in specific classes of graphs has been studied extensively (see the surveys \cite{FraigniaudL94,HarutyunyanLPR13,Hromkovic05}), including random graphs~\cite{FriezeM94}. Similarly,  the design of minimum broadcast graphs has been studied extensively (cf., e.g., \cite{AverbuchSR14,BermondFP95,HarutyunyanL19,HarutyunyanL20,KhachatrianH90,VenturaW93,WengV94}). In particular, it was shown~\cite{GrigniP91} that $B(n,0)=\Theta(n\cdot L(n))$ and  $B(n,1)=\Theta(n)$, where $L(n)$ denotes the number of leading 1s in the binary representation of $n-1$. The significant drop in the minimum number of edges between $n$-node graphs broadcasting in $\lceil\log_2n\rceil$ rounds, and graphs broadcasting in $\lceil\log_2n\rceil+1$ rounds motivated the study of \emph{time-relaxed} minimum broadcast graphs, initiated in~\cite{Shastri98} --- although the study of graphs broadcasting in $\lceil\log_2n\rceil+\tau$ rounds was also considered in~\cite{Liestman85}, but in the context of fault-tolerance (see also \cite{AGHK96,RiekstsV09}).  

General techniques for constructing sparse graphs with broadcast time $\lceil\log_2n\rceil+\tau$ were presented in~\cite{DinneenVWZ99}. The first generic upper bound on the minimum number  of edges, $B(n,\tau)$, of $\tau$-relaxed minimum broadcast graphs was given in~\cite{AverbuchPR17}. For $\tau=\epsilon\log n$ with $0<\epsilon<1$,  this bound is $B(n,\tau)=n+O(n^{1-\epsilon})$. This bound is however not tight, at least for $\epsilon \geq \alpha$. Indeed, as mentioned before, it is known~\cite{KhachatrianH89,Labahn89} that, for every $n\geq 1$, there exists an $n$-node tree with broadcast time at most $(1+\alpha)\log_2n$.  Note that $B(n,k)=n-1$ trivially holds for $k=\lceil\log_2n\rceil$. This bound is merely obtained using the Binomial tree $B_k$ as any node of $B_k$ is at distance at most $k$ from the root of~$B_k$, and broadcasting from the root of $B_k$ takes $k$ rounds. 

\paragraph{Remark.} 

Several of the papers cited above consider a general variant of the telephone model in which every informed node can call $q\geq 1$ nodes at each round, instead of just one as in the standard model. In this general case, the minimum broadcast time of $n$-node graphs is $\lceil\log_{q+1}n\rceil$, and $\tau$-relaxed $q$-broadcast graphs are graphs broadcasting in at most $\lceil\log_{q+1}n\rceil+\tau$ rounds. It was shown that the results  for $q=1$ in there extend smoothly to arbitrary fixed integer~$q\geq 1$, without occurrence of any disruption effect. The same holds for the results in this paper. 

\medskip

Finally, it is worth mentioning that, due to its relevance to several practical applications in parallel and distributed computing, as well as in networking in general, the broadcast problem has been considered in a wide range of models, including the \emph{affine} model~\cite{JohnssonH89}, the \emph{postal} model~\cite{Bar-NoyK97}, and several variants of the (randomized) \emph{push-pull} model (see, e.g., \cite{AvinE18,BerenbrinkCEG10,KarpSSV00}) to mention just a few. In several of these papers, not only ``one-to-all'' communication primitives  such as broadcasting are considered, but also \emph{gossiping} (all-to-all communication), \emph{scattering} (personalized one-to-all communication), and \emph{multi-scattering} (personalized all-to-all communication).

\subsection{Organization of the Paper}

The next section (Section~\ref{sec:main-construction}) provides the proof of Theorem~\ref{thm:upper-bound-general}. Sections~\ref{sec:lwb} and~\ref{sec:1-relaxed} respectively present the proofs of Theorems~\ref{thm:lower-bound} and~\ref{thm:upper-bound-tau-1}. Finally, Section~\ref{sec:conclusion} lists concluding remarks, and conjectures. 

\section{Construction of Sparse Relaxed MBGs}
\label{sec:main-construction}

In this section, we prove our general upper bound $O(n^{1-\epsilon/\alpha})$ on the overhead for $(\epsilon\log_2n)$-relaxed minimum broadcast graphs, as stated in Theorem~\ref{thm:upper-bound-general}. That is, we show that, for every $\epsilon\in (0,\alpha)$, and every $n\geq 1$, there is an $n$-node graph $G$ with  broadcast time $(1+\epsilon)\log_2n$, and $n+O(n^{1-\epsilon/\alpha})$ edges. 

\begin{proof}[Proof of Theorem~\ref{thm:upper-bound-general}]
Let $\epsilon\in (0,\alpha)$, and let $n_\epsilon\geq 1$ be a value depending on $\epsilon$ that will be fixed later. We are constructing an $n$-node $G$ satisfying the statement of the theorem for every $n\geq n_\epsilon$. (For smaller~$n$, any broadcast graph on $n$ vertices works.) Let us first recall the general structure of the desired relaxed broadcast graph, as sketched in Section~\ref{sec:introduction} (see Figure~\ref{fig:construction}).

\paragraph{General Structure of our Relaxed Broadcast Graph.}

Our construction is based on a core graph~$C$, to which is attached a Binomial tree $B_k$ at every node, truncated at level~$m$ (see Figure~\ref{fig:construction}). The tree $T_v$ rooted at each node $v$ of the core $C$ has $\sum_{i=0}^{m}\binom{k}{i}$ nodes, and if $|V(C)|=\nu$ then the total number of nodes of the resulting graph $H$ is $\nu \sum_{i=0}^{m}\binom{k}{i}$. Let us denote by $d$ the broadcast time of the core, i.e., $b(C)=d$. Then, thanks to Fact~\ref{basic-fact}, 
\[
b(H)\leq m+d+k.
\]
Indeed, for every node $u$ of $H$, it takes at most $m$ round to reach the root $v$ of the Binomial tree $T_v$ containing~$u$, then $d$ rounds to broadcast from $v$ to all nodes of~$C$, and finally $k$ rounds for allowing each node of the core to broadcast in the Binomial tree rooted at that node. 
Let $\mu$ denotes the number of edges of the core~$C$. Our goal is then to solve
\[
\min_{m,k,C} \;\; \mu + \nu \cdot \Big(\sum_{i=0}^{m}\binom{k}{i}-1\Big)
\]
under the two constraints 
\[
(i)\;\; \nu \sum_{i=0}^{m}\binom{k}{i}\geq n,
\;\;\mbox{and}\; \; (ii)\;\; 
m+d+k\leq (1+\epsilon)\log_2n.
\]
The first constraint will help us to eventually obtain a graph $G$ with exactly $n$ nodes out of the $\nu \sum_{i=0}^{m}\binom{k}{i}$ nodes of the graph~$H$. The second constraint is to guarantee that broadcasting in $H$ does perform in the desired amount of rounds. These two constraints combine gently as we shall show that reducing the number of nodes from $\nu \sum_{i=0}^{m}\binom{k}{i}$ down to $n$ does not increase the broadcast time, i.e., $b(G)\leq b(H)$. 

\paragraph{Choice of the core graph $C$.} 

It is tempting to choose the core as the $d$-dimensional hypercube~$Q_d$, as this graph with $\nu=2^d$ nodes has broadcast time $b(Q_d)=d=\log_2 \nu$, which is optimal. However, the hypercube is too dense for our purpose. 
Instead, we choose the core $C$ as a 1-relaxed broadcast graph with $\nu$ nodes, at most $2\nu$~edges, and broadcast time $b(C)=\lceil\log_2n\rceil+1$ rounds, using our construction from Theorem~\ref{thm:upper-bound-tau-1}, or the previously known construction in~\cite{GrigniP91}. More specifically, we set 
\[
\nu=2^{d-1},
\]
such that $b(C)=d$. 
The core graph $C$ with $2^{d-1}$ nodes can merely be obtained from the $(d-1)$-dimensional Binomial tree $B_{d-1}$ by adding edges so that all nodes but the root of $B_{d-1}$ are directly connected to that root by an edge. It follows that the number of edges of the core is 
\[
\mu\leq 2\nu=2^d.
\]
Moreover, to broadcast in $d$ rounds, every source node can send its message to the root at round~1, and the $d-1$ remaining rounds are used to broadcast from be root in $B_{d-1}$ (cf. Fact~\ref{basic-fact}). 

\paragraph{Fixing the parameters.}

Once the general structure of the core is fixed, it remains to choose $d,k$, and~$m$. Let $t=m+d+k$ be the upper bound on the broadcast time of~$H$, i.e., we have $k=(t-d)-m$. For fixed parameters $t$ and~$d$, our goal is to maximize the value of  
\[
\sum_{i=0}^m\binom{(t-d)-m}{i}.
\]
Indeed, by doing so, we are maximizing the number of nodes in each of the $\nu=2^{d-1}$ trees attached to the $\nu$-node core. Each such node contributes for a single edge to the total number of edges (the edge connecting it to its parent in the tree), which is the best that can be expected from a node. 

For maximizing $\sum_{i=0}^m\binom{(t-d)-m}{i}$, we use the following technical lemma. 

\begin{lemma}[\cite{HarutyunyanL01,KhachatrianH89,Labahn89}]
\label{lem:technical-asymp-1}
    For every non-negative integer~$s$, and every $m\in\{0,\dots,s\}$, let $f(s,m)=\sum_{i=0}^m\binom{s-m}{i}$, and let $F(s)=\max_{0\leq m\leq s}f(s,m)$. Let $\phi=\nicefrac12(\sqrt{5}+1)$ be the golden ratio.
    When $s\to+\infty$, we have 
    $
    F(s)=(1\pm o(1))\cdot \phi^s.
    $
    Moreover, the maximum is reached for 
    $
    m=(1\pm o(1)) \beta s
    $
    where $\beta=\nicefrac12(\sqrt{5}-1)/\sqrt{5}\approx 0.28$.  
\end{lemma}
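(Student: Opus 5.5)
The approach is to sandwich $f(s,m)$ between a Fibonacci-type upper bound and an entropy-type lower bound, with both bounds of order $\phi^{s}$ up to factors subexponential in~$s$. I read ``$F(s)=(1\pm o(1))\phi^s$'' in that exponential sense, i.e., $F(s)=\phi^{(1\pm o(1))s}$. The plan below does not establish the asymptotic with constant factor $1$.

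\textbf{Upper bound.} Interpret $f(s,m)$ combinatorially: by Fact~\ref{basic-fact}, it is the number of nodes of $B_{s-m}$ truncated at level $m$. Broadcasting from the root of this tree finishes by round $k=s-m$ and never goes below depth $m$. Give a node at depth $j$ that is informed at round $t$ the \emph{budget} $(k-t)+(m-j)$, so the root has budget $s$. When a node with budget $b$ makes a call, the called child has one more round elapsed and one more level of depth, hence budget $b-2$. The caller itself continues from the next round with budget $b-1$. Let $N(b)$ be the maximum number of nodes of a rooted tree broadcastable under budget $b$. A node with budget $b\leq 0$ can inform nobody, so $N(b)=1$ for $b\le 0$. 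Otherwise $N(b)\le N(b-1)+N(b-2)$. Hence $N(s)\le \mathrm{Fib}(s+2)=O(\phi^s)$, which gives $F(s)\le c\,\phi^s$ for a constant $c$.

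\textbf{Lower bound and location of the maximum.} For $m\le (s-m)/2$ we have $\binom{s-m}{m}\le f(s,m)\le (m+1)\binom{s-m}{m}$, so up to a polynomial factor $f(s,m)$ is governed by its top term. Write $m=xs$ and use $\binom{a}{b}=2^{aH(b/a)+O(\log a)}$, where $H$ is the binary entropy. This gives $\log_2 f(s,xs)=s\,g(x)+O(\log s)$ with $g(x)=(1-x)H\big(x/(1-x)\big)$ for $x\le 1/3$. For $x>1/3$ the sum is essentially $2^{s-m}$, which is smaller, so that range can be discarded. Maximize $g$ via the ratio of consecutive top terms: $\binom{s-m-1}{m+1}/\binom{s-m}{m}\approx (s-2m)^2/\big(m(s-m)\big)$. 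Setting this ratio to $1$ gives $5x^2-5x+1=0$, hence $x=(5-\sqrt5)/10=\beta$. Checking $g(\beta)=\log_2\phi$ is the remaining computation. I expect it to be the main (though routine) obstacle: use $1-2\beta=1/\sqrt5$ and $\beta/(1-\beta)=1/\phi^2$ to simplify the entropy expression. Once that holds, $F(s)\ge f(s,\lfloor\beta s\rfloor)=\phi^{s-O(\log s)}$, which matches the upper bound in the exponent.

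Finally, $g$ is strictly concave near its maximizer, with a unique maximum at $\beta$. Combining this with the $O(\log s)$ error terms shows that any $m$ attaining $F(s)$ must satisfy $m/s\to\beta$, i.e., $m=(1\pm o(1))\beta s$.
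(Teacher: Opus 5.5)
The paper does not prove this lemma. It is quoted from the cited literature, so there is no internal argument to compare against, and your proof stands as a self-contained replacement. It is correct.

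Your budget recursion gives $f(s,m)\le \mathrm{Fib}(s+2)$. A shorter route: $\binom{s-m}{i}\le\binom{s-i}{i}$ for $i\le m$, hence $f(s,m)\le\sum_{i\ge 0}\binom{s-i}{i}=\mathrm{Fib}(s+1)$.

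The step you flagged as the main obstacle takes two lines if you write $g(x)=-x\log_2\frac{x}{1-x}-(1-2x)\log_2\frac{1-2x}{1-x}$. At $x=\beta$ the two ratios are $\phi^{-2}$ and $1-\phi^{-2}=\phi^{-1}$, so $g(\beta)=\big(2\beta+(1-2\beta)\big)\log_2\phi=\log_2\phi$.

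For the location of the maximizer, make three details explicit:
\begin{itemize}
\item $g''(x)\ln 2=\frac{1}{1-x}-\frac{1}{x}-\frac{4}{1-2x}<0$ on $(0,\tfrac12)$, so $\beta$ is the unique global maximizer on $[0,\tfrac13]$, not merely a local one.
\item Your error terms are uniform in $m$, because $\frac{1}{a+1}2^{aH(b/a)}\le\binom{a}{b}\le 2^{aH(b/a)}$ for all $a,b$.
\item Discarding $m>s/3$ uses $f(s,m)\le 2^{s-m}<2^{2s/3}$ together with $2^{2/3}<\phi$, i.e.\ $\phi^3=2\phi+1>4$. The margin is thin ($1.587$ versus $1.618$), so state this check.
\end{itemize}

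Your caveat about the constant factor is not a weakness of your plan. Read literally, $F(s)=(1\pm o(1))\phi^s$ is false:
\begin{itemize}
\item For $m\le(\tfrac13-\delta)s$ the terms of $f(s,m)$ decrease geometrically below the top one, so $f(s,m)=O\big(\binom{s-m}{m}\big)$. Near $m=\beta s$ the ratio is about $1/\phi$ and the constant about $\phi^2$.
\item Stirling's formula, with its prefactor $\sqrt{a/(b(a-b))}$, gives $\max_j\binom{s-j}{j}=\Theta(\phi^s/\sqrt{s})$.
\item For larger $m$ you already have $f(s,m)\le 2^{(2/3+\delta)s}$, which is exponentially smaller.
\end{itemize}
Together these give $F(s)=\Theta(\phi^s/\sqrt{s})$, so $F(s)/\phi^s\to 0$. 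For instance, $F(40)=f(40,12)\approx 7.7\cdot 10^{7}$, while $\phi^{40}\approx 2.3\cdot 10^{8}$. Your exponential reading, or the sharper form $\Theta(\phi^s/\sqrt{s})$, is the correct version of the lemma.

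This matters for the paper. In the proof of Theorem~\ref{thm:upper-bound-general}, the term $\log_2(1\pm o(1))$ is absorbed into an $O(1)$. In fact it equals $-\tfrac12\log_2 s+O(1)$ with $s=t-d=\Theta(\log n)$. This raises the required $d$ by $\Theta(\log\log n)$. The overhead that construction produces is therefore multiplied by a polylogarithmic factor in~$n$.
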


It follows from this lemma that, whenever $t-d$ grows to infinity, the number of nodes in the graph $H$ obtained from the core graph $C$ in which a  Binomial tree truncated at level $m=(1\pm o(1))\beta (t-d)$ is attached to each node satisfies 
\[
|V(H)|=2^{d-1}\cdot (1\pm o(1))\cdot \phi^{t-d}. 
\]
We can now choose $d$ such that 
\[
|V(H)|\geq n > 2^{d-1}
\]
by setting $t=(1+\epsilon)\log_2n$, and solving the double inequality 
\[
2^{d-1}(1\pm o(1))\; \phi^{(1+\epsilon)\log_2n-d} \geq n > 2^{d-1}.
\]
For the first inequality, by taking the $\log_2$ on both side, we get:
\begin{align*}
     (d-1)  &+ \log_2(1\pm o(1)) + \big((1+\epsilon)\log_2n-d\big) \log_2\phi \geq \log_2n\\
    \iff & \; d\;(1-\log_2\phi) \geq \big(1-(1+\epsilon)\log_2\phi\big)\log_2n +1-\log_2(1\pm o(1))\\
     \iff & \; d\geq \frac{1-(1+\epsilon)\log_2\phi}{1-\log_2\phi}\log_2n +\frac{1-\log_2(1\pm o(1))}{1-\log_2\phi}\\
    \iff & \; d\geq \frac{1-(1+\epsilon)\log_2\phi}{1-\log_2\phi}\log_2n +\frac{1\pm o(1)}{1-\log_2\phi}.
\end{align*}
As $\epsilon>0$, we have $\frac{1-(1+\epsilon)\log_2\phi}{1-\log_2\phi}<1$, and thus there exists $n_\epsilon$ for which 
\[
\frac{1-(1+\epsilon)\log_2\phi}{1-\log_2\phi}\log_2n_\epsilon +\frac{1\pm o(1)}{1-\log_2\phi}<\log_2n_\epsilon.
\]
For every $n\geq n_\epsilon$, one can thus set $d$ as the \emph{smallest} integer satisfying 
\begin{equation}\label{eq:interval-for-d}
    \frac{1-(1+\epsilon)\log_2\phi}{1-\log_2\phi}\log_2n +\frac{1\pm o(1)}{1-\log_2\phi}\leq d < \log_2n+1, 
\end{equation}
which yields $|V(H)|\geq n>2^{d-1}$, as desired. 

Our choice of $d$ as the smallest integer at least as large as the lower bound in Eq.~\eqref{eq:interval-for-d} guarantees that 
\[
d\leq \frac{1-(1+\epsilon)\log_2\phi}{1-\log_2\phi}\log_2n +O(1), 
\]
and thus 
\[
|E(H)|\leq n+|E(C)|\leq n+2^d = n + O\Big(n^{\frac{1-(1+\epsilon)\log_2\phi}{1-\log_2\phi}}\Big),  
\]
where the negligible $o(1)$-terms are now hidden in the big-O notation. 
Since $\frac{1}{\log_2\phi}=\alpha+1$, we finally get 
\[
|E(H)|\leq n + O\Big(n^{\frac{(\alpha+1)-(1+\epsilon)}{(\alpha+1)-1}}\Big)
= n + O\big(n^{\frac{\alpha-\epsilon}{\alpha}}\big)
= n + O\big(n^{1-\epsilon/\alpha}\big). 
\]
Observe that, by our choice of $d\sim (1-\epsilon/\alpha)\log_2n$ and $t=(1+\epsilon)\log_2n$, we get that $t-d$ grows to infinity when $n$ grows, which suffices for applying Lemma~\ref{lem:technical-asymp-1}. 

\paragraph{Fixing the number of nodes.}

Our construction guarantees that the graph $H$ and the parameter $d$ governing the size $2^{d-1}$ of the core $C$ satisfy 
\[
|V(C)|< n\leq |V(H)|.
\]
It follows that getting a graph $G$ from $H$ with the desired number of nodes $n=|V(G)|$ can be  done by merely removing nodes from the (truncated) Binomial trees $T_v$ pending from every node $v$ of the core, without having to modify the structure of the core itself. 
Removing nodes from a Binomial tree by iteratively removing leaves cannot increase the broadcast time of the root, nor it can increase the distance from a leaf to the root. As a consequence, the  $n$-node graph $G$ obtained from $H$ by removing nodes from the Binomial trees pending from the core satisfies
\[
b(G)\leq b(H)\leq (1+\epsilon)\log_2n, 
\]
and 
\[
|E(G)|\leq |E(H)|\leq n+O(n^{1-\epsilon/\alpha}),
\]
as desired. 
This completes the proof of Theorem~\ref{thm:upper-bound-general}. 
\end{proof}

Theorem~\ref{thm:upper-bound-general} has an important corollary. Recall that the \emph{broadcast cost}~\cite{Harutyunyan14} of a graph $G$ is defined as $\cost(G)=b(G)\cdot|E(G)|$. That is, every edge of $G$ is charged for $b(G)$ units of cost. For every $n\geq 1$, let $\cost(n)$ be the minimum, taken over all $n$-node graphs, of the broadcast costs of these graphs. We show that the upper bound in~\cite{Harutyunyan14} on this minimum cost is a direct consequence of Theorem~\ref{thm:upper-bound-general}.

\begin{corollary}\label{cor:min-cost}
    Let $\alpha=1/\log_2\phi-1$ where $\phi=(\sqrt{5}+1)/2$ is the golden ratio. For every $n\geq 1$,  $\cost(n)=n\log_2n + \alpha \;n\log_2\log_2n + O(n)$.
\end{corollary}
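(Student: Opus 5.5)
We read the corollary the way the surrounding text frames it: as the statement that $\cost(n)$ is at most $n\log_2n+\alpha\,n\log_2\log_2n+O(n)$, recovering the upper bound of~\cite{Harutyunyan14}. The plan is to apply the construction from the proof of Theorem~\ref{thm:upper-bound-general} with a relaxation that depends on~$n$, namely $\epsilon=\epsilon(n)\to 0$, and then optimize the product $b(G)\cdot|E(G)|$.

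\textbf{The trade-off.} For a fixed $\epsilon$, the construction gives
\[
\cost(G)\le (1+\epsilon)\log_2 n\,\bigl(n+c\,n^{1-\epsilon/\alpha}\bigr)=n\log_2n+\epsilon\, n\log_2 n+c(1+\epsilon)\,n^{1-\epsilon/\alpha}\log_2 n .
\]
Write $x=\epsilon\log_2 n$, so that $n^{-\epsilon/\alpha}=2^{-x/\alpha}$. The two overhead terms become $xn$ and $c\,n\log_2 n\cdot 2^{-x/\alpha}$. They balance when $2^{-x/\alpha}\approx 1/\log_2 n$, that is, when $x=\alpha\log_2\log_2 n$. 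So I will take
\[
\epsilon(n)=\frac{\alpha\log_2\log_2 n}{\log_2 n}.
\]
With this choice the first overhead term is $\alpha\, n\log_2\log_2 n$ and the second is $O(n)$, which gives the claimed bound.

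\textbf{The main obstacle: uniformity in $\epsilon$.} Theorem~\ref{thm:upper-bound-general} is stated for a fixed $\epsilon$. Both its hidden constant and its threshold $n_\epsilon$ could a priori depend on~$\epsilon$, so it cannot be invoked as a black box when $\epsilon\to0$. I will therefore redo the parameter choice directly.
\begin{itemize}
\item Set $t=\lceil\log_2 n\rceil+\lfloor\alpha\log_2\log_2n\rfloor$.
\item Let $d$ be the smallest integer with $2^{d-1}F(t-d)\ge n$.
\item Use the same core, with $\mu\le 2^d$ edges, and the same truncated Binomial trees.
\end{itemize}
From Lemma~\ref{lem:technical-asymp-1}, $d(1-\log_2\phi)=\log_2 n-(t)\log_2\phi+O(1)$. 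Substituting $t$ and using $1/\log_2\phi=1+\alpha$ gives
\[
d=\log_2 n-\log_2\log_2 n+O(1),
\]
so the number of edges of the core is $2^d=O(n/\log_2 n)$.

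Two points must be checked here.
\begin{itemize}
\item The error factor $(1\pm o(1))$ in Lemma~\ref{lem:technical-asymp-1} only needs $t-d\to\infty$. This holds since $t-d=\Theta(\log\log n)$ or larger; in fact $t-d\ge \log_2\log_2 n$.
\item The additive constants depend only on $\phi$, not on~$\epsilon$, because the only place $\epsilon$ entered the proof of the theorem was through $t$.
\end{itemize}

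\textbf{Conclusion.} Pruning leaves to reach exactly $n$ nodes proceeds as before and increases neither the edge count nor the broadcast time. This yields an $n$-node graph with
\[
b(G)\le t=\log_2n+\alpha\log_2\log_2n+O(1),\qquad |E(G)|\le n+O\!\left(\frac{n}{\log_2 n}\right).
\]
Multiplying,
\[
\cost(G)\le n\log_2n+\alpha\,n\log_2\log_2n+O(n)+O\!\left(\frac{n\log_2 n}{\log_2 n}\right),
\]
and the last term is $O(n)$, which completes the upper bound. For small $n$, any graph works and is absorbed in the $O(n)$ term.

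A matching lower bound on the $\alpha\,n\log_2\log_2 n$ term is not proved by this argument. If the equality is meant in both directions, that direction would have to come from elsewhere: only $n\log_2n+\Omega(n)$ is available from~\cite{Harutyunyan14}.
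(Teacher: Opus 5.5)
Your proposal is the paper's proof: the same choice $\epsilon=\alpha\log_2\log_2 n/\log_2 n$ (i.e.\ $t\approx\log_2 n+\alpha\log_2\log_2 n$ and a core of size $\Theta(n/\log_2 n)$) in the core-plus-truncated-Binomial-tree construction, read as an upper bound just as the paper's proof is, and your direct re-derivation of $d$ makes explicit the uniformity in $\epsilon$ that the paper uses tacitly when it plugs an $n$-dependent $\epsilon$ into Theorem~\ref{thm:upper-bound-general}. One caution applies equally to the paper: the $O(1)$ in your formula for $d$ relies on the form $F(s)=(1\pm o(1))\phi^s$ of Lemma~\ref{lem:technical-asymp-1}, which is stronger than what is true (already $F(s)\le\sum_{i}\binom{s-i}{i}$, the $(s+1)$-st Fibonacci number $\approx\phi^{s+1}/\sqrt{5}$, and in fact $F(s)=\Theta(\phi^s/\sqrt{s})$), and once the $\sqrt{s}$ factor is accounted for, the same optimization of $t$ and $d$ only gives an additive error of order $n\log\log\log n$ rather than $O(n)$.
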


\begin{proof}
    By Theorem~\ref{thm:upper-bound-general}, for every $n\geq 1$, and  for every $\epsilon\in(0,\alpha)$, we have 
    \[
    \cost(n)= (1+\epsilon)\log_2n \cdot\big(n+O(n^{1-\epsilon/\alpha})\big). 
    \]
    Set $\epsilon=\frac{\alpha\;\log_2\log_2 n}{\log_2n}$. We then get that 
    \[
    \cost(n)=n\log_2n + \alpha \; n\log_2\log_2n + O\big(\log_2n\cdot n^{1-\frac{\log_2\log_2n}{\log_2n}}\big).
    \]
    Since $n^{-\frac{\log_2\log_2n}{\log_2n}}=2^{-\log_2\log_2n}=(\log_2n)^{-1}$, we eventually get that 
    \[
    \cost(n)=n\log_2n + \alpha \;n\log_2\log_2n + O(n), 
    \]
    as claimed. 
\end{proof}

\paragraph{Remark.} 

The $n$-node graph $G$ minimizing the $c(G)$ in the proof of Corollary~\ref{cor:min-cost} has broadcast time 
$
b(G)=\log_2n+O(\log\log n).
$
It may be the case that minimizing the cost could be obtained with a graph $G$ with broadcast time $b(G)=\log_2n+O(1)$. This would however hold only if such a graph exists, with $n+o(n/\log n)$ edges. In this case, this graph would have a cost equal to $n\log_2n+O(n)$, matching the known lower bound $n\log_2n+\Omega(n)$ from~\cite{Harutyunyan14}. Actually, we believe that such a graph $G$ does \emph{not} exist (see Section~\ref{sec:conclusion}). That is, we believe that every $n$-node graph with broadcast time $\log_2n+c$ for some constant $c\geq 0$ must have $n+\Omega(n)$ edges, where the constant hidden in the big-$\Omega$ notation depends on~$c$. We shall prove that this is the case for $c=1$ in Section~\ref{sec:lwb}.

\section{Lower Bound}
\label{sec:lwb}

In this section, we prove Theorem~\ref{thm:lower-bound}, that is, we show that, 
for every $n\geq 1$ power of~2, any $n$-node graph with broadcast time $\log_2n+1$ has at least $\frac98n$ edges. 
We start with establishing a result of independent interest about the structure of subtrees of Binomial trees. 

\begin{figure}[tb]
\centerline{\includegraphics[scale=0.5]{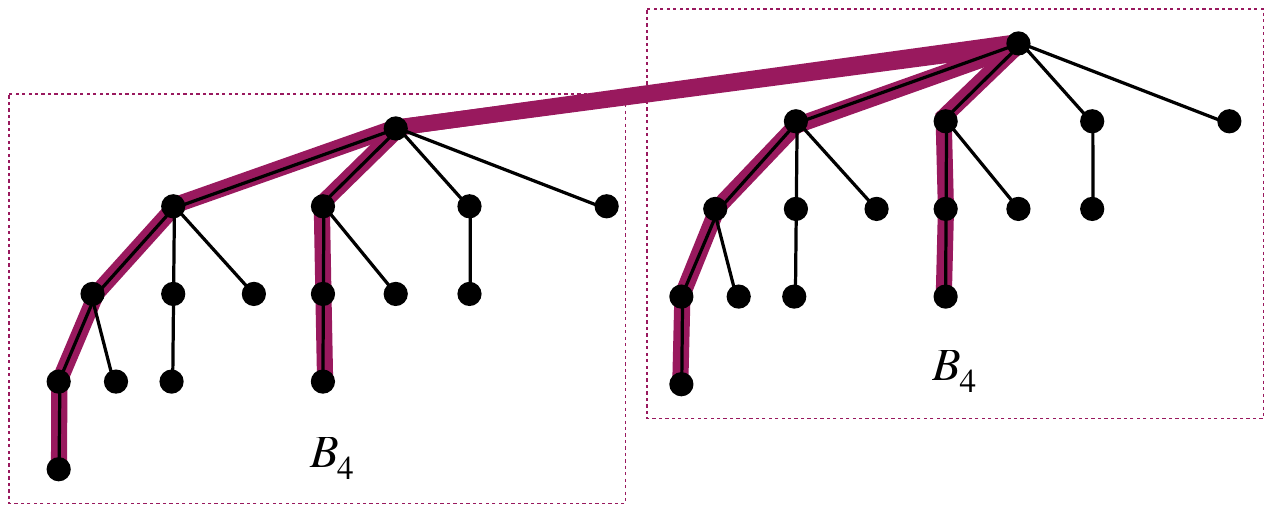}}
\caption{\sl The Binomial tree $B_4$ contains a 8-node subtree with two leaves, and the Binomial tree $B_5$ contains a 16-node subtree with four leaves. In general, for $k\geq 2$, the Binomial tree $B_k$ contains a $2^{k-1}$-node subtree with $2^{k-3}$ leaves, which is the least number of leaves for a $2^{k-1}$-node subtree. }
\label{fig-2-leaves}
\end{figure}

\begin{lemma}
\label{lem-number-leaves-subtree-BT}
Let $k\geq 1$ be an integer, and let $T$ be a subtree of the $k$-dimensional Binomial tree~$B_k$, rooted at the root of $B_k$, and with $n=2^{k-1}$ vertices. The number of leaves of $T$ is at least $n/4 = 2^{k-3}$.
\end{lemma}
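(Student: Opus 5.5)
The plan is to prove a stronger statement that survives induction: a lower bound on the number of leaves of \emph{every} rooted subtree of $B_k$, whatever its size. Let $f_k(N)$ be the minimum number of leaves of a subtree of $B_k$ containing the root and having $N$ vertices. For $1\le N\le 2^k$ we count the root as a leaf only when $N=1$. The lemma is the special case $f_k(2^{k-1})\ge 2^{k-3}$. For $k\le 2$ the bound $2^{k-3}<1$ is trivial, so assume $k\ge 3$.

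\textbf{Recursive decomposition.} By definition, $B_k$ consists of a copy $B'$ of $B_{k-1}$ containing the root $r$, and a second copy $B''$ of $B_{k-1}$ whose root $r''$ is a child of $r$. A rooted subtree $T$ of $B_k$ therefore splits as follows:
\begin{itemize}
\item $T_1=T\cap B'$ is a rooted subtree of $B_{k-1}$ with $N_1\ge 1$ vertices;
\item $T_2=T\cap B''$ is either empty or a rooted subtree of $B_{k-1}$ with $N_2=N-N_1$ vertices.
\end{itemize}
The leaves of $T$ are exactly the leaves of $T_1$ and of $T_2$, except that $r$ stops being a leaf when $T_1=\{r\}$ and $T_2\ne\emptyset$. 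With the convention above this gives
\[
f_k(N)\;\ge\;\min\Big(f_{k-1}(N),\ \min_{N_1+N_2=N,\ N_1,N_2\ge 1}\big(f_{k-1}(N_1)+f_{k-1}(N_2)\big)-[N_1=1]\Big).
\]
Here the first term is the case $T_2=\emptyset$, and $[N_1=1]$ equals $1$ if $N_1=1$ and $0$ otherwise.

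\textbf{Choosing the induction hypothesis.} I would find explicit extremal trees first. These are a root path of length up to $k$ giving $f_k(N)=1$ for $N\le k+1$, and the recursive "half-half" trees of Figure~\ref{fig-2-leaves}. Then I would guess a function $g_k$ with $f_k\ge g_k$ and $g_k(2^{k-1})=2^{k-3}$. The natural candidate is an explicit nondecreasing lower envelope $g_k$ through the extremal values. It equals $1$ on $[1,k+1]$ and satisfies $g_k(2N)\ge 2g_{k-1}(N)$. In particular the balanced split $N_1=N_2=2^{k-2}$ of a $2^{k-1}$-vertex tree gives $2\cdot 2^{k-4}=2^{k-3}$, consistent with the target.

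To close the induction, two things are needed.
\begin{itemize}
\item Monotonicity $g_{k-1}\ge g_k$ handles the case $T_2=\emptyset$. This is legitimate since $B_{k-1}\subseteq B_k$ forces $f_k\le f_{k-1}$, so one must design $g$ accordingly.
\item A superadditivity inequality $g_{k-1}(N_1)+g_{k-1}(N_2)-[N_1=1]\ge g_k(N_1+N_2)$ is needed for all splits.
\end{itemize}
A concavity/convexity argument on the piecewise-linear interpolation of $g_{k-1}$ would reduce the second inequality to the extreme splits (very unbalanced or balanced).

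\textbf{Main obstacle.} The hard step is pinning down $g_k$ for intermediate sizes $N$ so that superadditivity holds for unbalanced splits. For example, $T_1$ can be large and cheap in leaves while $T_2$ is a long path. If the piecewise-linear guess fails, my fallback is the subset encoding of $B_k$. Vertices are subsets $S\subseteq\{1,\dots,k\}$, and the children of $S$ are the sets $S\cup\{j\}$ with $j>\max S$. Each leaf of $T$ can then be charged for the vertices on its root path. Since only $\binom{k}{i}$ vertices lie at depth $i$, a path-counting argument bounds $N$ in terms of the number of leaves and the depths available. From that bound I would derive the $n/4$ estimate directly.
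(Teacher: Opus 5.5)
Your proposal sets up a correct recursion, but it stops before the part that carries the content of the lemma. The envelope $g_k$ is never defined, and the inequality $g_{k-1}(N_1)+g_{k-1}(N_2)-[N_1=1]\ge g_k(N_1+N_2)$ is never verified. The two properties you do impose on $g_k$ are, as stated, incompatible. If $g_k=1$ on $[1,k+1]$, then for $2N\le k+1$ you get $g_k(2N)=1<2=2g_{k-1}(N)$. For example, take $k=3$, $N=2$: the $4$-vertex root path of $B_3$ has a single leaf.

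The fallback does not close the gap either. The only concrete input you name is that depth $i$ holds $\binom{k}{i}$ vertices. That yields at best $N\le\sum_{i=0}^{k}\min(\ell,\binom{k}{i})$ for a subtree with $\ell$ leaves, and this is already too weak for $k=6$: there $\ell=7<2^{k-3}$ leaves would be compatible with $1+6+7+7+7+6+1=35\ge 32$ vertices. What limits $N$ is not the width of the levels but the heights of pendant subtrees. Below a vertex $S$ in your subset encoding only $k-\max S$ further levels exist, so most root-to-leaf paths are short after they branch off.

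The missing idea is to invert the roles: bound the maximum number $M_k(\ell)$ of vertices of a rooted subtree of $B_k$ with $\ell$ leaves by an explicit function of $\ell$.
\begin{itemize}
\item Let $E_k$ be the multiset of path lengths in the long-path decomposition of $B_k$: one path with $k$ non-root vertices, and $2^{k-1-j}$ paths with $j$ vertices for $1\le j\le k-1$. Equivalently, $E_1=\{1\}$ and $E_k=\{k,k-1\}\cup 2(E_{k-1}\setminus\{k-1\})$.
\item Let $\sigma_k(\ell)$ be the sum of the $\ell$ largest elements of $E_k$. Note that $\sigma_k(\ell)=\sigma_{k-1}(\ell)+\ell$ for $\ell\le 2^{k-2}$.
\end{itemize}
Your own decomposition $B_k=B'\cup B''$ then proves $M_k(\ell)\le 1+\sigma_k(\ell)$ by induction on $k$:
\begin{itemize}
\item The case $T_2=\emptyset$ uses $\sigma_{k-1}\le\sigma_k$.
\item The case $T_1=\{r\}$ uses $1+\sigma_{k-1}(\ell)\le\sigma_k(\ell)$.
\item When $T_1\ne\{r\}$ and $T_2\ne\emptyset$ carry $\ell_1+\ell_2=\ell$ leaves with $\ell_1,\ell_2\ge 1$, the sum $\sigma_{k-1}(\ell_1)+\sigma_{k-1}(\ell_2)$ is at most the sum of the $\ell$ largest elements of $E_{k-1}\uplus E_{k-1}$. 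That sum equals $\sigma_k(\ell)-1$, so $|T|\le 1+\sigma_k(\ell)$.
\end{itemize}
The elements of $E_k$ that are at least $3$ number exactly $2^{k-3}$ and sum to $2^{k-1}-1$. Hence for $k\ge 4$, a subtree with fewer than $2^{k-3}$ leaves has at most $1+\sigma_k(2^{k-3}-1)=2^{k-1}-3<2^{k-1}$ vertices, which is the lemma.

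This is a different route from the paper's. The paper writes $B_k\setminus T$ as a union of pendant Binomial subtrees of total size $2^{k-1}$ and bounds how many leaves of $B_k$ they can remove.
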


\begin{proof}
   As a subtree of $B_k$, the tree $T$ can be obtained by removing from $B_k$ a collection $C$ of $m\geq 1$ subtrees of~$B_k$ where each tree $B_i\in C$, $1\leq i\leq m$, is itself isomorphic to a Binomial tree of dimension $d_i\in\{0,\dots,k-1\}$, with $x_i=2^{d_i}$ vertices. Since $T$ has $2^{k-1}$ vertices, we get that the number of removed vertices satisfies
    \begin{equation}
        \label{eq:number-of-nodes}
        \sum_{i=1}^m x_i = 2^{k-1}. 
    \end{equation}
    The Binomial tree $B_k$ has $2^{k-1}$ leaves. Let us count how many leaves are removed after the removal of all the Binomial trees in~$C$. Removing a tree $B_i$ of dimension~$d_i$>0 removes at most $2^{d_i-1}=x_i/2$ leaves from~$B_k$. Removing a tree $B_i$ of dimension~$d_i$=0 removes at most one leaf from~$B_k$. Let us enumerate the trees in $C$ as $B_1,\dots,B_{m'},B_{m'+1},\dots,B_m$ where the $m'\geq 0$ first trees satisfy $d_i>0$, and all the remaining subtrees satisfy $d_1=0$. We get that the total number of leaves removed from $B_k$ is at most 
    \[
    R=\Big (\frac12\sum_{i=1}^{m'} x_i\Big ) + (m-m').  
    \]
    We now aim  at determining how large $R$ can be  under the constraint of Eq~\eqref{eq:number-of-nodes}. 
    Since the number of leaves of $B_k$ is at most $2^{k-1}$, we get that $m-m'\leq 2^{k-1}$. However, half of these $2^{k-1}$ leaves of $B_k$ are connected to a vertex of degree~2 in $B_k$, which implies that removing such a leaf actually keeps the number of leaves unchanged. In other words, at most  $2^{k-2}$ leaves of $B_k$ may contribute to~$R$, and it can only be better to remove Binomial trees of dimension larger than~$0$ as far as maximizing~$R$ is concerned. That is, our goal is in fact to solve 
    \[
    \text{max}\;\; \frac12\sum_{i=1}^{m'} x_i \;\; 
    \text{under the constraint} \; \sum_{i=1}^{m'} x_i = 2^{k-2}.
    \]
    This maximum is therefore at most~$\frac{2^{k-2}}{2}=2^{k-3}$, from which it follows that 
    \[
    R\leq 2^{k-2}+2^{k-3}. 
    \]
    As a consequence, $T$ has at least $2^{k-1}-(2^{k-2}+2^{k-3})=2^{k-3}$ leaves. 
\end{proof}

Note that the lower bound in Lemma~\ref{lem-number-leaves-subtree-BT} is tight for all $k\geq 3$. In particular, there is a 8-node subtree of $B_4$ with 2 leaves --- see Figure~\ref{fig-2-leaves}. The construction on Figure~\ref{fig-2-leaves} illustrating the tightness of Lemma~\ref{lem-number-leaves-subtree-BT} for $k\in\{4,5\}$ generalizes to all $k\geq 3$. 

The proof of Theorem~\ref{thm:lower-bound} is based on analyzing the number of leaves in the broadcast tree of any root node~$v$, and Lemma~\ref{lem-number-leaves-subtree-BT} is used for deriving a lower bound on this number of leaves, which is in turn used for lower bounding the number of nodes with degree at most~2 in $n$-node graphs with broadcast time $\lceil\log_2n\rceil+1$. 

\begin{proof}[Proof of  Theorem~\ref{thm:lower-bound}]
Let $G=(V,E)$ be an $n$-node graph with $n=2^k$, and let us assume that the broadcast time $b(G)\leq k+1$. 
We consider two cases depending on whether the minimum degree of~$G$, denoted by $\delta(G)$, is at least~2 or not. 

Let us first assume that $\delta(G) \geq 2$, and let us consider a node $v$ with highest degree in~$G$, i.e., $\deg_G(v)=\max_{u\in V}\deg_G(u)$. Consider a $(k+1)$-round broadcast protocol from $v$ in~$G$. This protocol induces a subtree $T_v$ of~$G$, referred to as the broadcast tree of~$v$. Since $b(T_v,v)\leq k+1$, the tree $T_v$ is also a subtree of the $(k+1)$-dimensional Binomial tree $B_{k+1}$, where $v$ is the root of $B_{k+1}$.  
Since $\delta(G) \geq 2$, all the leaves of $T_v$ must have degree at least~2 in~$G$. Thus, every leaf in $T_v$ must be incident to another edge in graph~$G$. By Lemma~\ref{lem-number-leaves-subtree-BT}, $T_v$~has at least $n/4=2^{k-2}$ leaves. Even if all these leaves were connected to each other in~$G$, $G$~must still contain at least $\frac12 \cdot 2^{k-2}= 2^{k-3}$ additional edges. Thus, in total, $G$ must have a number of edges at least 
$
n - 1 + 2^{k-3} \geq  n - 1 + \frac{n}{8} = \frac{9}{8}n - 1.
$

Let us now assume that $\delta(G)=1$, and let $u$ be a node of degree~1 in~$G$. Again, let us consider a broadcast protocol from $u$ in $G$ performing in at most $k+1$ rounds, and let $T_u$ be the associated broadcast tree. Node $u$ has a unique neighbor $v$ in~$G$, and thus $u$ calls $v$ at the first round of the broadcast protocol. Node $v$ is the root of the tree $T_v = T_u\smallsetminus\{u\}$ containing $2^k-1$ vertices. In particular, $T_v$ is actually $B_k$ except one leaf vertex.  Broadcasting from $v$ in $T_v$ takes at most $k$ rounds, which implies that $T_v$ is a subtree of the $k$-dimensional Binomial tree~$B_k$. $T_v$ has $2^{k-1}-1$ leaves, and thus $2^{k-1}$ internal nodes. 
Assume that, out of the $2^{k-1}-1$ leaves of $T_v$, $x$~vertices have degree at least 2 in~$G$, and the remaining $(2^{k-1}-1) - x$ have degree~1 in~$G$. That is, including~$u$, $2^{k-1}- x$ nodes have degree~1 in~$G$. Even all the $x$ vertices of degree at least~2 were of degree exactly~2, and were connected to each other in graph~$G$, $G$~would still have at least $n - 1 + \lceil \frac{x}{2} \rceil$ edges.

If $x \geq 2^{k-2} = \frac{n}{4}$, then $G$ contains at least $\lceil \frac{x}{2}\rceil \geq \lceil \frac{n}{4} \rceil = \frac{n}{8}$ additional edges, which proves the theorem, and we are done.  
So, let us now assume that $x \leq 2^{k-2} - 1$. That is, the tree $T_v$ has at least 
\[
(2^{k-1}-1) - x \geq (2^{k-1}-1) - (2^{k-2}-1) \geq 2^{k-2}
\]
leaves with degree~1 in~$G$. Let $\ell$ be a leaf of $T_v$ with degree~1 in~$G$. 
To broadcast in $G$ in $k+1$ rounds from node~$\ell$, this node must inform its neighboring vertex~$w$ at round~1. To be able to broadcast from $w$ to the $2^k-1$ remaining vertices, in $k$ rounds in the graph $G\smallsetminus \{\ell\}$, it must be that $\deg_{G\smallsetminus \{\ell\}}(w) \geq k-1$. This lower bound must hold for each and every leaf $\ell$ of tree $T_v$ with degree~1 in~$G$. It follows that any internal vertex of $T_v$ that has a neighbor of degree~1 in~$G$ must have degree at least $k - 1$ in $G$.  

Every vertex of~$B_k$, as well as of~$T_v$, has at most one neighbor of degree 1. The total number of internal vertices of $T_v$ is at least the half of its total number of vertices. So, $T_v$ contains $\lceil \frac{2^k-1}{2} \rceil = 2^{k-1}$ internal vertices. As $x < 2^{k-2}$, we get that, out of these $2^{k-1}$ internal vertices,   $2^{k-1} - x > 2^{k-2}$ must have a neighbor of degree~1 in graph $G$. So, to count the number of additional edges in graph~$G$, we now focus on the degree distribution of the internal vertices of tree~$T_v$, which is the same as in the Binomial tree~$B_k$. 

Let $S_i$ be the set of vertices of $T_v$ with degree $i$. We have  $|S_k|=2$, and, for all $j\in\{2,\dots,k-2\}$, $|S_j| = 2^{k-j}$. This amounts for a total of 
\[
|S_k|+|S_{k-1}|+...+|S_2| = 2+2+2^2+...+2^{k-2} = 2^{k-1}
\]
internal vertices in $T_v$.  
For every  $i\in\{2,\dots,k\}$, let $x_i$ be the number of vertices of $S_i$ that have a neighbor of degree~1 in graph $G$. We have $0 \leq x_i \leq 2^{k-i}$, from which it follows that there are $x_i(k-1-i)$ additional edges going out of the $x_i$ vertices of~$S_i$. In total,  
\[
\sum_{i=2}^{k-2} x_i(k-1-i) 
\]
additional edges incident to all internal vertices. Even if all these vertices were connected to each other,  the number of additional edges in $G$ would still be at least 
$
\frac{1}{2} \sum_{i=2}^{k-2} x_i(k-1-i).
$
Note that the vertices of $S_k$ and $S_{k-1}$, which amount for four vertices in total,  have already degree at least $k-1$. Thus, we are let to 
\begin{align*}
& \text{minimize}\; A=  \frac{1}{2} \sum_{i=2}^{k-2} x_i(k-1-i) \\
& \mbox{under the constraints $\sum_{i=2}^{k-2} x_i\geq 2^{k-2} - 3$, where $0 \leq x_i \leq 2^{k-i}$ for every $i\in\{2,\dots,k\}$}. 
\end{align*}
We have $x_2 \geq 2^{k-2}-3 - \sum_{i=3}^{k-2} x_{i}$. Plugging this into~$A$, we get
\begin{align*}
A 
& = \frac{1}{2}[x_2(k-3) + \sum_{i=3}^{k-2} x_i(k-1-i)] \\
& \geq \frac{1}{2}[(k-3)(2^{k-2} - 3) - \sum_{i=3}^{k-2}x_i) + \sum_{i=3}^{k-2} x_i(k-1-i)] \\
& = \frac{1}{2}(k-3)(2^{k-2} - 3) - \frac{1}{2}\sum_{i=3}^{k-2}[ x_i(k-3) - x_i(k-1-i)] \\
& = \frac{1}{2}(k-3)(2^{k-2} - 3) - \frac{1}{2}\sum_{i=3}^{k-2} x_i(i-2)\\
& \geq \frac{1}{2}(k-3)(2^{k-2} - 3) - \frac{1}{2}\sum_{i=3}^{k-2} 2^{k-i}(i-2),
\end{align*}
where the last inequality holds because $x_i \leq 2^{k-i}$. 
Since all the terms of the last summation are upper bounded by $2^{k-3}$ as $2^{k-i}(i-2) \leq 2^{k-3}$ for every $i \geq 3$, we get 
\begin{align*}
A 
& \geq \frac{1}{2}(k-3)(2^{k-2} - 3) - \frac{1}{2}\sum_{i=3}^{k-2} 2^{k-3} \\
& = \frac{1}{2}(k-3)(2^{k-2} - 3) - \frac{1}{2}(k-4)2^{k-3}\\
& = \frac{1}{2}[(k-3)2^{k-2} - 3(k-3) - (k-4)2^{k-3}] \\
& = \frac{1}{2}2^{k-3}(2k-6-k+4) - \frac{3}{2}(k-3) \\
& = 2^{k-4}(k-2) - \frac{3}{2}(k-3). 
\end{align*}
Therefore, in total, $G$~must contain at least 
$
2^{k-4}(k-2) - \frac{3}{2}(k-3)
$
additional edges. Finally, we get that  
$
|E(G)| \geq n - 1 + 2^{k - 4}(k - 2) - \frac{3}{2}(k-3) = n - 1 + (\log_2 n - 2) \cdot \frac{n}{16} - \frac{3}{2}(\log_2 n - 3) \geq n-1+\frac{n}{8}
$
for $k \geq 6$, i.e., $n \geq 64$. This concludes the proof.
\end{proof}

\section{1-Relaxed Minimum Broadcast Graphs}
\label{sec:1-relaxed}

This section is establishing Theorem~\ref{thm:upper-bound-tau-1}, which says that, for every $n\geq 1$, there exists an $n$-node graph with broadcast time $\lceil\log_2n\rceil+1$ and at most $2n-4\log_2n+O(1)$ edges. 

\begin{proof}[Proof of Theorem~\ref{thm:upper-bound-tau-1}]
Our construction is a refinement of the construction in~\cite{GrigniP91}, which was used to show that for every $n\geq 1$, there exists an $n$-node graph $G$ with broadcast time $\lceil\log_2n\rceil+1$ and at most $2n-\lceil\log_2n\rceil-2$ edges. Let $k=\lceil\log_2n\rceil$. The graph $G$ constructed in~\cite{GrigniP91} is obtained from the Binomial tree $B_k$. \begin{itemize}
    \item First, $B_k$ is pruned so that the resulting rooted tree $T$ has $n$ nodes. This pruning is performed by removing nodes from $B_k$ bottom-up, i.e., removing first the node at distance~$k$ from the root $r$ of~$B_k$, then the $\binom{k}{k-1}$ nodes at distance $k-1$, etc. Note that since $2^{k-1}<n\leq 2^k$, no children of the root of $B_k$ are removed by this pruning process. 
    \item Second, for every node $u$ different from the root $r$ of $B_k$, and different from any child of $r$ in $B_k$, an edge $\{u,r\}$ is added to $B_k$. 
\end{itemize}
We have $b(G)\leq k+1$. Indeed, for every source node $u$, the broadcast protocol consists of
\begin{enumerate}
    \item $u$ sends the message to $r$ (unless $u=r$, in which case this first round is skipped); 
    \item the $k$-round broadcast protocol from $r$ is performed in the subtree $T$ of $B_k$. 
\end{enumerate}
By construction, $|E(G)|\leq |E(T)|+(n-(k+1))=2n-k-2$. We show now how to save an additional $3k$ additive factor. 

A first observation in that one can use the symmetry of $B_k$ to save $k$ edges. Indeed, $B_k$ is obtained by connecting the two roots $r$ and $r'$ of two copies of $B_{k-1}$ by an edge (see Figure~\ref{fig:improvingGP}). Let $r$ be the root of $B_k$. Not only $b(B_k,r)=k$, but also $b(B_k,r')=k$. It follows that there is no need to connect each children of $r'$ to $r$. This saves $k-1$ edges whenever $k\geq 4$. Saving another $2k$ edges requires more work. 

\begin{figure}[htb]
\centerline{\includegraphics[scale=0.5]{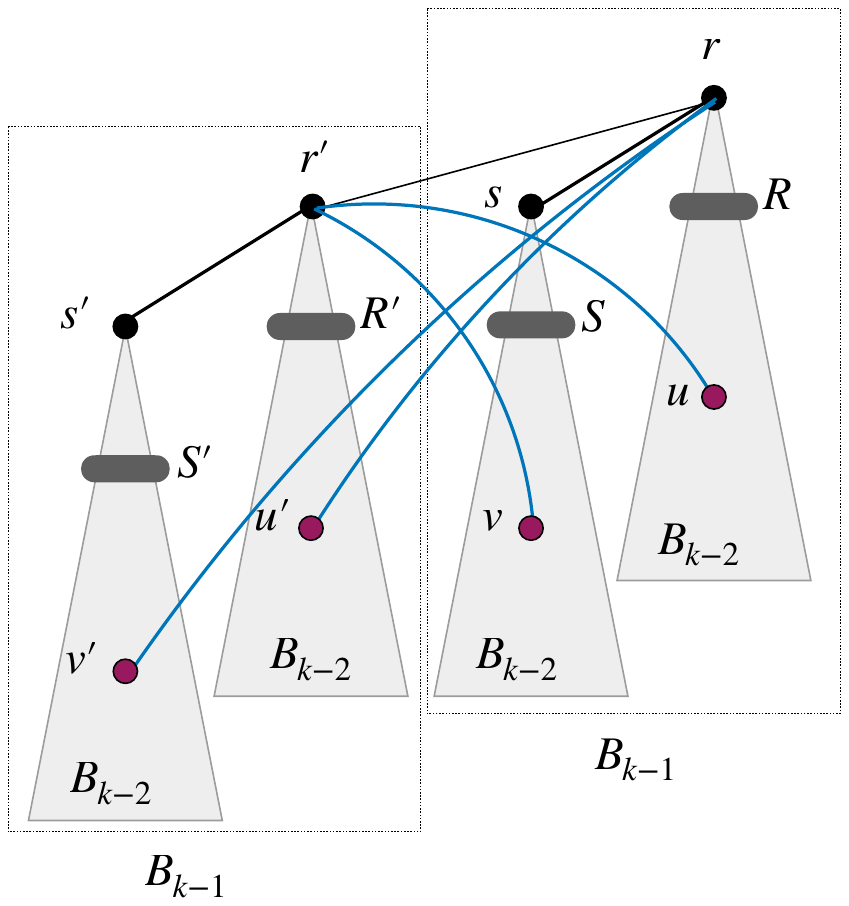}}
\caption{\sl  Illustration of the construction in the proof of Theorem~\ref{thm:upper-bound-tau-1}. The Binomial tree $B_k$ is composed of two Binomial trees $B_{k-1}$, each one composed of two trees $B_{k-2}$. The sets $R,S,R',S'$ are the respective children of the four roots $r,s,r',s'$ of these four trees. }
\label{fig:improvingGP}
\end{figure}

The Binomial tree $B_k$ actually consists of four copies of $B_{k-2}$, as illustrated on Figure~\ref{fig:improvingGP} where the two copies of the Binomial trees $B_{k-1}$ are each decomposed into two copies of $B_{k-2}$. In this figure, the sets $R,S,R'$, and $S'$ are the respective children of $r,s,r'$, and $r$ in the four $(k-2)$-dimensional Binomial trees. We claim that there is no need to connect the nodes in $S$ to $r$ (resp., the nodes in $S'$ to $r'$), and the corresponding edges can be removed from the previous construction, saving $|S|+|S'|=2(k-2)$ edges. However, this saving requires to rewire the other edges for preserving a $(k+1)$-round broadcast time. 

Specifically, instead of connecting every node $x$ in any of the two copies of $B_{k-1}$ to the root of its own $(k-1)$-dimensional Binomial tree, we ``cross'' the connections, i.e., we connect $x$ to the root of the other $(k-1)$-dimensional Binomial tree (see Figure~\ref{fig:improvingGP}). For $k\geq 6$, the total number of edges of the resulting graph $G$ is 
\[
|E(G)| = 2(n - 1) - 2\big(1+(k - 1) + (k - 2)\big)
= 2n - 4k + 2, 
\]
as claimed. It remains to check that $b(G)\leq k+1$. 

Let $x$ be a vertex of $G$. Since $b(B_k,r)=b(B_k,r')=k$, we have $b(G,x)\leq k+1$ for every node $x$ in the neighborhood of these two nodes in~$G$, that is, for every $x\notin S \cup S'$. So, let us consider a node $x\in S$ (the case $x\in S'$ is the same, by symmetry).   

Observe that, for $k\geq 8$, the bottom-up pruning of $B_k$ that was performed to get a tree $T$ with exactly $n$ nodes, $2^{k-1}< n \leq 2^k$, preserves the fact that $x$ has at least one child~$y$ in~$T$. Broadcast from $x$ proceeds as follows. 
\begin{itemize}
    \item Node $x$ calls $y$ at the first round. 
    \item At the second rounds, $y$ calls $r'$, and $x$ calls $s$. 
    \item At the third round, $r'$ calls $s'$, and  $s$ calls $r$. So, at the end of round~3, the roots of the four copies of the $(k-2)$-dimensional Binomial trees have received the message. \item An additional $k-2$ rounds suffices to broadcast in parallel in the four $(k-2)$-dimensional Binomial trees.
\end{itemize}
Thus $b(G,x)\leq 3+(k-2) = k+1$, which completes the proof. 
\end{proof}

\section{Conclusion}
\label{sec:conclusion}

This paper focuses on the minimum number of edges $(n-1)+h(n,\tau)$ of any $n$-node graph with broadcast time $\lceil\log_2n\rceil+\tau$. Combining our results with the ones in \cite{GrigniP91,KhachatrianH89,Labahn89}, the overhead function $h(n,\tau)$ satisfies, for every $n\geq 1$,  $h(n,0)=\Theta(n\cdot L(n))$, and 
\begin{equation}\label{eq:B1}
    \frac{n}{8}\leq h(n,1)\leq n-4\log_2n+O(1).
\end{equation}
Moreover, for every $\epsilon>0$,
\[
h(n,\epsilon\cdot\log_2n)=
\left\{\begin{array}{ll}
O(n^{1-\epsilon/\alpha}) & \mbox{if $\epsilon<\alpha=1/\log_2\phi-1$ where $\phi=(\sqrt{5}+1)/2$}\\
0 & \mbox{otherwise.}
\end{array}\right.
\]
We refer to Figure~\ref{fig:summary} for an illustration of these results. 

Equation~\eqref{eq:B1} shows that there is still a large gap between the best known upper and lower bounds on the smallest number of edges of graphs with broadcast time $\lceil\log_2n\rceil+1$. Closing this gap is an interesting open problem. 

For the general case, we actually believe that our upper bound on $h(n,\epsilon\cdot\log_2n)$ is tight, at least as far as the degree of the polynomial in~$n$ is concerned. This belief is supported by the fact that the parameters in our construction are optimized for producing the sparsest graphs. Since the construction in~\cite{AverbuchPR17} produces denser graphs, it appears that beating the $O(n^{1-\epsilon/\alpha})$ bound should requires to come up with a radically different construction, where the graph is not composed of a relatively dense core to which trees are attached, but of a tree to which are added $o(n^{1-\epsilon/\alpha})$ edges in a specific manner guaranteeing broadcast time at most $(1+\epsilon)\log_2 n$, in a way similar to the proof of Theorem~\ref{thm:upper-bound-tau-1} in which edges are added to a subtree of a Binomial tree. Such a construction may exist, but its design seems challenging. 

\begin{conjecture}
    Let $\alpha=1/\log_2\phi-1$ where $\phi=(\sqrt{5}+1)/2$ is the golden ratio. For every $\epsilon\in(0,\alpha)$,  any $n$-node graph with broadcast time at most $(1+\epsilon)\log_2n$ has at least $n+\tilde\Omega(n^{1-\epsilon/\alpha})$ edges, where the $\tilde\Omega$-notation hides possible polylogarithmic factors in~$n$. 
\end{conjecture}

A case of particular interest is $n$-node graphs with broadcast time $\lceil\log_2n\rceil+c$ where $c$ is a constant. 

\begin{conjecture}
    For every integer $c\geq 1$, any $n$-node graph with broadcast time at most $\lceil\log_2n\rceil+c$ has at least $n+\Omega(n)$ edges. 
\end{conjecture}

Thanks to Theorem~\ref{thm:lower-bound}, we know that this conjecture holds for $c=1$, at least for infinitely many values of~$n$. We believe that the conjecture holds for all~$n$, and every constant $c\geq 1$.

\newpage
\bibliographystyle{plainurl}
\bibliography{references}

\end{document}